  \providecommand\BibTeX{{%
    \normalfont B\kern-0.5em{\scshape i\kern-0.25em b}\kern-0.8em\TeX}}}
\tikzstyle{vertex}=[circle,fill,inner sep=1pt]
\tikzstyle{point}=[circle,fill,inner sep=1pt]
\tikzstyle{spoint}=[circle,fill,inner sep=0.88pt]
\def\msi{\textsc{Multicolored Subgraph Isomorphism}\xspace}
\def\mkc{\textsc{Multicolored $k$-Clique}\xspace}
\def\ksc{\textsc{$k$-Set Cover}\xspace}
\def\shs{\textsc{Structured 2-Track Hitting Set}\xspace}
\def\vgag{\textsc{Vertex Guard Art Gallery}\xspace}
\def\pgag{\textsc{Point Guard Art Gallery}\xspace}
\def\wone{$W[1]$\xspace}
\def\sti{two-block\xspace}
\newcommand{\rot}[1]{\begin{turn}{90}#1\end{turn}}
\def\asso{associate\xspace}
\def\ray{$\text{ray}$}
\def\seg{$\text{seg}$}
\def\segm{\nobreak\hspace{.25em}$\text{seg}$}
\title{Parameterized Hardness of Art Gallery Problems}
\author{\'Edouard Bonnet}
\email{edouard.bonnet@ens-lyon.fr}
\affiliation{%
\institution{Univ Lyon, CNRS, ENS de Lyon, Universit\'e Claude Bernard Lyon 1, LIP UMR5668}
}
\author{Tillmann Miltzow}
\email{t.miltzow@gmail.com}
\affiliation{%
  \institution{Utrecht University}
}
\keywords{Computational Geometry, Art Gallery, Parameterized Complexity, Intractability, ETH lower bound}
\begin{abstract}
  Given a simple polygon $\mathcal{P}$ on $n$ vertices, two points $x,y$ in $\mathcal{P}$ are said to be visible to each other if the line segment between $x$ and $y$ is contained in $\mathcal{P}$. 
  The 
  \pgag 
  problem 
  asks for a minimum set $S$ such that every point in $\mathcal{P}$ is visible from a point in $S$. 
  The 
  \vgag
  problem asks for such a set $S$ subset of the vertices of $\mathcal{P}$. 
  A point in the set $S$ is referred to as a guard. 
  For both variants, we rule out any $f(k)n^{o(k / \log k)}$ algorithm, where $k := |S|$ is the number of guards, for any computable function $f$, unless the Exponential Time Hypothesis fails. 
  These lower bounds almost match the $n^{O(k)}$ algorithms that exist for both problems.
\end{abstract}
\begin{document}

\maketitle

\section{Introduction}
  Two points $x,y$ in a simple polygon $\mathcal{P}$ are said to be visible to each other if the line segment between $x$ and $y$ is contained in $\mathcal{P}$. 
  The
  \pgag 
  problem asks for a minimum set $S$ such that every point in $\mathcal{P}$ is visible from a point in $S$. 
  The 
  \vgag 
  problem asks for such a set $S$ subset of the vertices of $\mathcal{P}$.
  In both cases, such a set $S$ is a \emph{guarding set} and its elements are called \emph{guards}.
  In the decision versions, given a simple polygon and an integer, one has to decide if there is a guarding set for the polygon of cardinality at most the integer.
  In what follows, $n$ refers to the number of vertices of $\mathcal{P}$ and $k$ to the allowed number of guards.
  
The art gallery problem is arguably one of the most well-known problems in discrete and computational geometry. 
Since its introduction by Viktor Klee in 1976, numerous research papers were published on the subject.
O'Rourke's early book from 1987 \cite{o1987art} has over two thousand citations, and each year, top conferences publish new results on the topic. 
Many variants of the art gallery problem, based on different definitions of visibility, restricted classes of polygons, different shapes of guards, have been defined and analyzed. 
  One of the first results is the elegant proof of Fisk that $\lfloor n/3 \rfloor$ guards are always sufficient and sometimes necessary for a polygon with $n$ vertices~\cite{DBLP:journals/jct/Fisk78a}.

  The art gallery problem was shown NP-hard by Aggarwal in his PhD thesis \cite{aggarwal84} and by Lee and Lin \cite{lee86}.
  Eidenbenz et al.~\cite{eidenbenz2001inapproximability} even showed APX-hardness for the most standard variants. 
  See also~\cite{katz2008guarding,broden2001guarding, DBLP:journals/algorithmica/KrohnN13} for other hardness constructions.
  Very recently, Abrahamsen et al. \cite{Abrahamsen18} showed that \pgag is $\exists \mathbb R$-complete.
  In particular, this problem is unlikely to be in NP. This is maybe intuitive, if we consider simple instances of the art gallery 
  problem, which need irrational numbers for an optimal guard placement~\cite{IrrationalGuards}.
  In contrast, Dobbins, Holmsen and Miltzow~\cite{Dobbins18} showed how to find a solution with rational coordinates using the concept of smoothed analysis.
Due to those negative results, most papers focus on finding approximation algorithms and on variants or restrictions that are polynomially tractable~\cite{ghosh2010approximation, DBLP:journals/dcg/Kirkpatrick15, DBLP:journals/comgeo/King13, DBLP:journals/jcss/MotwaniRS90, DBLP:journals/algorithmica/KrohnN13}.
For the \pgag problem on simple polygons, there is an $O(\log{\text{OPT}})$-approximation under some assumptions (integer coordinates and some special \emph{general position} of the vertices) \cite{BonnetM17}.
The approximation relies on the construction of $\varepsilon$-nets and ideas from Efrat and Har-Peled \cite{DBLP:journals/ipl/EfratH06}.
For polygons with $h$ holes, there is a polynomial approximation algorithm with ratio $O(\log{\text{OPT}} \cdot \log {h})$ which guards all but a $\delta$-fraction of the polygon \cite{Elbassioni17}.
Recently, a constant-factor approximation was announced for \vgag \cite{Bhattacharya17}.
However, a mistake was later found~\cite{AshurWeakVertexApprox}. 
   Another approach is to find heuristics to solve large instances of the art gallery problem~\cite{de2016engineering}. 
   Naturally, the fundamental drawback of this approach is the lack of performance guarantees.

In the last twenty-five years, another fruitful approach gained popularity: parameterized complexity. 
The underlying idea is to study algorithmic problems with dependence on a natural parameter. 
If the dependence on the parameter is practical and the parameter is small for real-life instances, we attain algorithms that give optimal solutions with reasonable running times. 
  For a gentle introduction to parameterized complexity, we recommend Niedermeier's book~\cite{Niedermeier2006}.
  For a thorough reading highlighting complexity classes, we suggest the book by Downey and Fellows~\cite{downey2012parameterized}.
  For a recent book on the topic with an emphasis on algorithms, we advise to read the book by Cygan et al.~\cite{DBLP:books/sp/CyganFKLMPPS15}.
  An approach based on logic is given by Flum and Grohe~\cite{flum2006parameterized}.
  Despite the recent successes of parameterized complexity, only very few results on 
  the art gallery problem are known prior to this paper.
%

  The first such result is the trivial algorithm 
  for the vertex guard variant to check if a solution of size $k$ exists in a polygon with $n$ vertices. 
  The algorithm runs in $O(n^{k+2})$ time, by checking 
  all possible subsets of size $k$ of the vertices. 
  The second \emph{not so well-known} result is the fact that one can find in time $n^{O(k)}$ a set of $k$ guards for the point guard variant, if it exists~\cite{DBLP:journals/ipl/EfratH06}, using tools from real algebraic geometry~\cite{basu2011algorithms}.  
  This was first observed by Sharir~\cite[Acknowledgment]{DBLP:journals/ipl/EfratH06}.
  Despite the fact that the first algorithm is extremely basic and the second algorithm, even with remarkably sophisticated tools, uses almost no problem specific insights, no better exact parameterized algorithms are known.

The Exponential Time Hypothesis (ETH) asserts that there is no $2^{o(N)}$ time algorithm for \textsc{Sat} on $N$ variables. 
The ETH is used to attain more precise conditional lower bounds than the mere NP-hardness.
  A simple reduction from \textsc{Set Cover} by Eidenbenz et al. shows that there is no $f(k)n^{o(k)}$ algorithm for these problems, when we consider polygons with holes~\cite[Sec.4]{eidenbenz2001inapproximability}, unless the ETH fails.
  However, polygons with holes are very different from simple polygons. 
For instance, they have unbounded VC-dimension while simple polygons have bounded VC-dimension~\cite{valtr1998guarding, gilbers2014new,kalai1997guarding, gibson2015vc}.

  
We present the first lower bounds for the parameterized art gallery problems restricted to \emph{simple} polygons. 
Here, the parameter is the optimal number $k$ of guards to cover the polygon. 

  \begin{restatable}[Parameterized hardness point guard]{theorem}{ParaPoint}
    \label{thm:ParaPoint}
    \pgag 
    is not solvable in time $f(k)n^{o(k / \log k)}$, even on simple polygons, where $n$ is the number of vertices of the polygon and $k$ is the number of guards allowed, for any computable function $f$, unless the ETH fails. 
  \end{restatable}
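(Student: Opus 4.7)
The plan is a parameter-preserving reduction from a source problem that already carries an $n^{o(k/\log k)}$ conditional lower bound under ETH. A natural candidate is \msi or \si, for which such a bound follows from results on embedding pattern graphs with prescribed treewidth. Given a multicolored instance with $k$ color classes, I aim to construct in polynomial time a simple polygon $\mathcal{P}$ on $n' = \mathrm{poly}(|V|+|E|)$ vertices such that $\mathcal{P}$ admits a point guard set of size $\Theta(k)$ if and only if the source instance is a yes-instance. Then any $f(k')n'^{o(k'/\log k')}$ algorithm for \pgag, combined with the reduction, would yield an $f(k)n^{o(k/\log k)}$ algorithm for the source, contradicting ETH.

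The construction would be built from two kinds of gadgets. A \emph{selection gadget}, one per color class $C_i$, is a region carved into $\mathcal{P}$ in which every feasible guard set must place a prescribed (constant) number of guards and whose precise positions encode a chosen vertex $u \in C_i$. Since point guards can be placed anywhere, I can encode the vertex by the coordinate of a guard on a short segment, together with a family of thin pockets shaped so that only coordinates in bijection with $V(C_i)$ see all pockets at once. A \emph{verification gadget}, one per edge $uv \in E$ with $u \in C_i$, $v \in C_j$, is a small pocket visible only from pairs of points in the two corresponding selection gadgets that jointly encode $(u,v)$. A routine accounting should then show that a valid solution to the source instance yields $\Theta(k)$ guards, and conversely that any $\Theta(k)$-guard set can be decoded into such a solution.

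The principal obstacle is to keep $\mathcal{P}$ \emph{simple} (no holes) while simultaneously realizing these rich, pairwise-independent visibility relations. In the earlier reduction of Eidenbenz et al.\ for polygons \emph{with holes}, each gadget may be placed as a disjoint obstacle, which buys independence essentially for free; here, every gadget must hang off the single connected boundary, and unintended long sightlines between distant gadgets threaten to merge distinct constraints. I plan to arrange all gadgets along the boundary of a large outer frame, separating them by narrow reflex corridors tuned to block every cross-gadget sightline except those demanded by verification gadgets, whose required sightlines are routed through oblique channels aligned with the specific pairs of selection gadgets they concern. The bulk of the technical work will be to certify that this geometric architecture realizes exactly the intended visibility graph and nothing more, so that the reduction ratio $k' = \Theta(k)$ transfers the $n^{o(k/\log k)}$ bound without loss.
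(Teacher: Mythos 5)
There is a genuine gap at the heart of your plan: the \emph{verification gadget}. In a simple polygon, a pocket (a small triangular notch rooted at a deep convex vertex $z$ and supported by two rays) imposes a constraint of the form ``at least one guard lies in the cone obtained by extending the two supporting rays back into the main body of the polygon''; when the candidate guard positions lie on a segment, that cone cuts out an \emph{interval} of them. Coverage is monotone and disjunctive, so no single pocket can be ``visible only from pairs of points that jointly encode $(u,v)$'': you cannot force a conjunction of two specific guard placements with one notch, and you give no mechanism for doing so with several. This is exactly the obstruction the paper confronts, and its resolution is structurally different from what you propose. Rather than encoding edges of the pattern graph directly in the geometry, the paper first reduces \msi (combinatorially) to \ksc on \sti instances and then to \shs, a hitting-set problem whose constraints are precisely the two primitives a simple polygon can realize: (i) intervals over points ordered by $y$-coordinate, encoded by far-away pockets, and (ii) a \emph{bijective} linking between a point $\alpha^j_i$ on one track and its associate $\beta^j_i$ on the other. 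Only the bijection --- not arbitrary edge relations --- is realized geometrically, via a ``linker'' built from $\Theta(t)$ pockets whose correctness rests on Lemma~\ref{lem:linking-set-system:pg&vg}, and even that requires an auxiliary copy $\overline{\alpha}^j_i$ of each $\alpha^j_i$ (hence $3k$ guards) because the weak linker alone only confines the second guard to a region whose extreme point is $\beta^j_i$, not to $\beta^j_i$ itself.

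Your outline also underestimates the ``routine accounting'' and the corridor architecture. The budget argument in the paper does not follow from generic separation of gadgets: it hinges on exhibiting, for each color class, pockets whose visibility regions are pairwise disjoint across classes, so that three guards per class are forced, and then on a delicate case analysis (rectangular pockets pinning guards to the lines $\ell(\alpha^j_1,\alpha^j_t)$ and $\ell(\overline{\alpha}^j_1,\overline{\alpha}^j_t)$, followed by the two-sided squeeze on the third guard). Without first passing to an intermediate problem whose constraints are intervals plus a matching, the step ``any $\Theta(k)$-guard set can be decoded into a solution'' of \msi does not go through as described. The overall strategy (reduce from \msi to inherit the $n^{o(k/\log k)}$ bound, keep the parameter linear, fight cross-gadget sightlines) is the right one, but the missing idea is the interposition of \shs and the linker construction that replaces your unrealizable per-edge verification pockets.
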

   
  \begin{restatable}[Parameterized hardness vertex guard]{theorem}{ParaVertex}
    \label{thm:ParaVertex}
    \vgag 
     is not solvable in time $f(k)n^{o(k / \log k)}$, even on simple polygons, where $n$ is the number of vertices of the polygon and $k$ is the number of guards allowed, for any computable function $f$, unless the ETH fails. 
  \end{restatable}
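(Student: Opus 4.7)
The plan is to establish Theorem~\ref{thm:ParaVertex} by a parameterized reduction that parallels the one for Theorem~\ref{thm:ParaPoint}, starting from a $W[1]$-hard source problem whose ETH-based lower bound is of the desired form $f(k)n^{o(k / \log k)}$. The natural candidate is \shs (already set up as a macro), which presumably is shown in an earlier section to admit such a lower bound by a reduction from \mkc that introduces the $\log k$ factor via a standard $k$-element packing into $\log k$-sized blocks. The target will be a simple polygon whose minimum vertex guard set has size $\Theta(k)$ iff the source instance is a \textsc{Yes}-instance.

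First, I would reuse the overall polygon blueprint from the point guard construction: a collection of \emph{selection gadgets}, one per track/block, each of which is forced to contribute exactly one guard encoding a chosen element, together with \emph{verification regions}, one per constraint, each visible only from the combinations of vertices that witness the constraint being satisfied. The selection gadget is realized as a narrow ``pocket'' whose private interior can be seen only from a designated set of reflex vertices, one per possible choice; the lower bound on the number of guards needed then comes from counting pockets, and the upper bound from exhibiting one guard per pocket that jointly cover all verification regions iff the corresponding \shs solution is valid.

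Second, and this is the main point of departure from the point guard case, I would ensure that every guard location used in the intended solution is a genuine polygon vertex. Concretely, each encoded choice is represented by a reflex vertex of its selection gadget, positioned so that the set of vertices it sees inside the verification regions corresponds exactly to the constraints it satisfies. One direction of correctness (a \shs solution yields a vertex guard set of size $\Theta(k)$) is immediate by construction. For the other direction, since vertex guards are drawn from a polynomially-sized set, I need only argue that every small vertex guard set induces a valid assignment of elements to tracks; this follows from the pocket argument, as each selection pocket must be hit by a guard and can only be hit by one of its candidate vertices.

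The hard part will be simultaneously satisfying three constraints: keeping the polygon simple (no holes), keeping the vertex count polynomial in the source instance size, and keeping the parameter blow-up linear so that the $k/\log k$ exponent in \shs transfers intact to $k/\log k$ for \vgag. I expect the most delicate step to be engineering the verification regions so that the visibility pattern from the candidate vertices of different selection gadgets encodes exactly the bipartite compatibility structure of \shs, without spurious ``long-distance'' visibilities leaking through the corridors that connect the gadgets. This is typically addressed by inserting carefully angled zigzag corridors whose reflex vertices block all unintended sightlines, and one may reasonably hope that the construction built for Theorem~\ref{thm:ParaPoint}, with the candidate guard positions already chosen to coincide with reflex vertices, doubles directly as a proof of Theorem~\ref{thm:ParaVertex}.
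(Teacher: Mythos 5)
Your high-level architecture (reduce from \shs, selection gadgets each forced to contribute one guard, verification pockets per interval far away on two tracks) matches the paper's, but the core of your plan --- that the point-guard construction ``doubles directly'' as a vertex-guard proof once the candidate positions are made to coincide with vertices --- is exactly what fails, and the paper says so explicitly: linking pairs of vertices is very different from linking pairs of points and requires fresh gadgets. The point-guard linker pins a guard to an exact location by intersecting two continuous constraints (a region whose uppermost point is $\beta^j_i$ with a region whose lowermost point is $\beta^j_i$, via a triangle of three weak linkers); that argument has no meaning when guards range over a discrete vertex set, and the intended positions $\alpha^j_i,\overline{\alpha}^j_i,\beta^j_i$ in that construction are interior points lying on supporting lines of thin rectangular pockets, not vertices of the polygon. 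The paper instead builds a genuinely new \emph{vertex linker}: the $\alpha^j_i$ are consecutive convex vertices on a segment, while each $\beta^j_i$ is a convex vertex flanked by reflex vertices that clip its view of two distant walls (placed behind reflex vertices $r^j_{\downarrow}$ and $r^j_{\uparrow}$) to initial portions ending at $p^j_i$ and $q^j_i$, so that $\{\alpha^j_v,\beta^j_w\}$ covers the gadget iff $v=w$ (Lemma~\ref{lem:linker-vg}).

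Moreover, your choice to represent each encoded choice ``by a reflex vertex of its selection gadget'' points in the wrong direction: a guard on a reflex vertex sees around its corner and hence sees strictly more than the adjacent convex vertices, so a cheating solution placed there can cover the pockets associated with several distinct choices at once, breaking the backward direction of the reduction. This is precisely the failure mode the paper has to engineer around: after introducing the reflex vertices needed to clip the visibility of the $\beta^j_i$'s, it must add a separate \emph{filter gadget} $\mathcal F_j$ (Lemma~\ref{lem:filter-vg}) whose $2t-1$ pockets can be completed by a single additional guard only if the guard already placed in $\mathcal P_j$ sits on a convex vertex $\beta^j_i$ and not on an intermediate reflex vertex; the $B$-track pockets then point at the filter vertices $d^j_i$ rather than at the $\beta^j_i$'s. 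Your proposal contains no mechanism playing this role, and without one the claim that ``each selection pocket can only be hit by one of its candidate vertices'' is false. (The guard budget also reflects this: the intended solution uses $3k$ vertex guards, one of them inside each filter gadget.)
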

  
  These results imply that the previous noted algorithms are essentially tight, and suggest that there are no significantly better parameterized algorithms. 
Our reductions are from \textsc{Subgraph Isomorphism} and therefore an $f(k)n^{o(k)}$-algorithm for the art gallery problem would also imply improved algorithms for \textsc{Subgraph Isomorphism} and for CSP parameterized by treewidth, which would be considered a major breakthrough~\cite{Marx10}.
Let us also mention that our results imply that both variants are \wone-hard parameterized by the number of guards.

After the conference version of this paper appeared, the parameterized complexity of the art gallery and related problems was investigated further.
The parameterized complexity of the terrain guarding problem was studied~\cite{ashok18}.
The terrain guarding problem is a particular case of the art gallery problem, where instead of a polygon, one should guard an $x$-monotone curve.
This restriction is still NP-hard~\cite{King11}, even on rectilinear (that is, every edge is horizontal or vertical) terrains~\cite{bonnet2019orthogonal}.
The authors of \cite{ashok18} present an $n^{O(\sqrt{k})}$-time algorithm (hence $2^{O(n^{1/2} \log n)}$) for guarding general $n$-vertex terrains with $k$ guards, and an FPT $k^{O(k)}n^{O(1)}$-time algorithm for guarding the vertices of rectilinear terrains.
Note that there is no $2^{o(n^{1/3})}$ algorithm for terrain guarding, unless the ETH fails~\cite{bonnet2019orthogonal}.

The art gallery problem parameterized by the number of reflex vertices is considered by Agrawal et al.~\cite{agrawal2020parameterized}.
The authors present an FPT algorithm for \vgag under this parameterization. 
See also~\cite{agrawal2019fpt} for FPT algorithms on the (strong) conflict-free coloring of terrains.


%

\section{Proof Ideas}\label{subsec:proof-idea}

In order to achieve these results, we slightly extend some known hardness results of geometric set cover/hitting set problems and combine them with problem-specific insights of the art gallery problem.
  One of the first problem-specific insights is the ability to encode \textsc{Hitting Set} on interval graphs. 
  The reader can refer to Figure~\ref{fig:IntervalEncoding} for the following description.
  Assume that we have some fixed points $p_1,\ldots,p_n$ with increasing $y$-coordinates in the plane. 
We can build a pocket ``far enough to the right'' that can be seen only from $\{p_i,\ldots,p_j\}$ for any $1\leqslant i<j\leqslant n$. 
  \begin{figure}[htpb]
\centering
\includegraphics{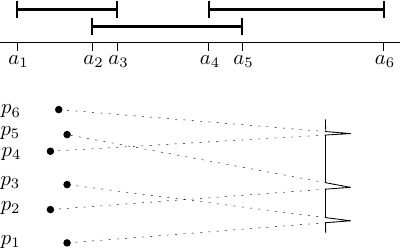}
\caption{Reduction from \textsc{Hitting Set} on interval graphs to a restricted version of the art gallery problem.}
\label{fig:IntervalEncoding}
\end{figure}

Let $I_1,\ldots,I_n$ be $n$ intervals with endpoints $a_1,\ldots, a_{2n}$.
Then, we construct $2n$ points $p_1,\ldots, p_{2n}$ representing $a_1,\ldots, a_{2n}$. 
Further, we construct one pocket ``far enough to the right'' for each interval as described above. 
This way, we reduce \textsc{Hitting Set} on interval graphs to a restricted version of the art gallery problem.
This observation is \emph{not} so useful in itself since \textsc{Hitting Set} on interval graphs can be solved in polynomial time. 

\begin{figure}[htpb]
  \centering
  \includegraphics{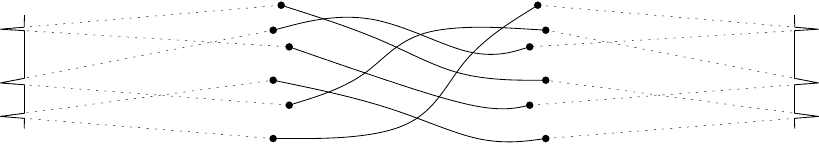}
  \caption{Two instances of Hitting Set ``magically'' linked.}
  \label{fig:MagicLinking}
\end{figure}

The situation changes rapidly if we consider \textsc{Hitting Set} on $2$-track interval graphs, as described in the preliminaries. 
Unfortunately, we are not able to just ``magically'' link (see Figure~\ref{fig:MagicLinking}) some specific pairs of points in the polygon of the art gallery instance. 
Instead, we construct linking gadgets, which work ``morally'' as follows. 
We are given two set of points $P$ and $Q$ and a bijection $\sigma$ between $P$ and $Q$. 
The linking gadget is built in a way that it can be covered by two points $(p,q)$ of $P \times Q$, if and only if $q = \sigma(p)$.
The \shs problem will be specifically designed so that the linking gadget is the main remaining ingredient to show hardness.
This intermediate problem is a convenient starting point for parameterized reductions to other geometric problems.
For instance, the parameterized hardness of \textsc{Red-Blue Points Separation}, where given a set of blue points and a set of red points in the plane, one has to find at most $k$ lines so that no cell of the arrangement is bichromatic, was obtained by a reduction from \shs \cite{BonnetGL17}.

\textbf{Organization.}
The rest of the paper is organized as follows.
In Section~\ref{sec:prelim}, we introduce some notations, discuss the encoding of the polygon, give some useful ETH-based lower bounds, and prove a technical lemma.
In Section~\ref{sec:intermediate}, we prove the lower bound for \shs (Theorem~\ref{cor:main}). 
Lemma~\ref{thm:2-intervals-cover} contains the key arguments.
From this point onward, we can reduce from \shs. 
In Section~\ref{sec:point-guard}, we show the lower bound for the \pgag problem (Theorem~\ref{thm:ParaPoint}).
We design a linking gadget, show its correctness, and show how several linking gadgets can be combined consistently. 
In Section~\ref{sec:vertex-guard}, we tackle the \vgag problem (Theorem~\ref{thm:ParaVertex}).
We have to design a very different linking gadget, that has to be combined with other gadgets and ideas.

\section{Preliminaries}\label{sec:prelim}

For any two integers $x \leqslant y$, we set $[x,y]:=\{x,x+1,\ldots,y-1,y\}$, and for any positive integer $x$, $[x]:=[1,x]$.
Given two points $a,b$ in the plane, we define $\seg(a,b)$ as the line segment with endpoints $a,b$. 
Given $n$ points $v_1,\ldots,v_n \in \mathbb{R}^2$, we define a polygonal closed curve $c$ by 
$\seg(v_1,v_2)$, $\ldots,$ $\seg(v_{n-1},v_n)$, $\seg(v_n,v_1)$. If $c$ is not self intersecting, it partitions the plane into a closed bounded area and an unbounded area. 
The closed bounded area is a \emph{simple polygon} on the vertices $v_1,\ldots,v_n$. 
Note that we do not consider the boundary as the polygon but rather all the points bounded by the curve $c$ as described above.
Given two points $a,b$ in a simple polygon $\mathcal{P}$, we say that $a$ \emph{sees} $b$ or $a$ is {\em visible} from $b$ if $\seg(a,b)$ is contained in $\mathcal{P}$.
By this definition, it is possible to ``see through'' vertices of the polygon.
We say that $S$ is a set of \emph{point guards} of $\mathcal{P}$, if every point $p\in \mathcal{P}$ is visible from a point of $S$.
We say that $S$ is a set of \emph{vertex guards} of $\mathcal{P}$, if additionally $S$ is a subset of the vertices of $\mathcal{P}$.
The \pgag problem  and the \vgag problem are formally defined as follows.
\begin{quote}
   \noindent \textbf{\pgag}
 \\  \textbf{Input:} The vertices of a simple polygon $\mathcal{P}$ in the plane and a natural number $k$. \\
\textbf{Question:} Does there exist a set of $k$ point guards for $\mathcal{P}$?
\end{quote}
\begin{quote}
   \noindent \textbf{\vgag}
 \\  \textbf{Input:} A simple polygon $\mathcal{P}$ on $n$ vertices in the plane and a natural number $k$. \\
\textbf{Question:} Does there exist a set of $k$ vertex guards for
$\mathcal{P}$?
\end{quote}

For any two distinct points $v$ and $w$ in the plane we denote by $\ray(v,w)$ the ray starting at $v$ and passing through $w$, and by $\ell(v,w)$ the supporting line passing through $v$ and $w$.  
For any point $x$ in a polygon $\mathcal P$, $V_{\mathcal P}(x)$, or simply $V(x)$, denotes the \emph{visibility region} of $x$ within $\mathcal P$, that is the set of all the points $y \in \mathcal P$ seen by $x$.
We say that two vertices $v$ and $w$ of a polygon $\mathcal P$ are \emph{neighbors} or \emph{consecutive} if $vw$ is an edge of $\mathcal P$.
A \emph{sub-polygon} $\mathcal P'$ of a simple polygon $\mathcal P$ is defined by any $l$ distinct consecutive vertices $v_1, v_2, \ldots, v_l$ of $\mathcal P$ (that is, for every $i \in [l-1]$, $v_i$ and $v_{i+1}$ are neighbors in $\mathcal P$) such that $v_1v_l$ does not cross any edge of $\mathcal P$. 
In particular, $\mathcal P'$ is a simple polygon.

\textbf{Encoding.}
We assume that the vertices of the polygon are either given by integers or by rational numbers.
We also assume that the output is given either by integers or by rational numbers. 
The instances we generate as a result of Theorem~\ref{thm:ParaPoint} and Theorem~\ref{thm:ParaVertex} have rational coordinates.
We can represent each coordinate by specifying the nominator and denominator.
The number of bits is bounded by $O(\log n)$ in both cases. We can transform the coordinates to integers by multiplying every coordinate with the least common multiple of all denominators. However, this leads to integers using $O(n\log n)$ bits.



\textbf{ETH-based lower bounds.}
The \emph{Exponential Time Hypothesis} (ETH) is a conjecture by Impagliazzo et al.~\cite{impagliazzo1999complexity} asserting that there is no $2^{o(n)}$-time algorithm for \textsc{3-SAT} on instances with $n$ variables.
The $k$-\textsc{Multicolored-Clique} problem has as input a graph $G=(V,E)$, where the set of vertices is partitioned into $V_1,\ldots,V_k$. It asks if there exists a set of $k$ vertices $v_1\in V_1,\ldots,v_k\in V_k$ such that these vertices form a clique of size $k$.
We will use the following lower bound proved by Chen et al.~\cite{chen2006strong}.
\begin{theorem}[\cite{chen2006strong}]\label{thm:chen06}
There is no $f(k)n^{o(k)}$ algorithm for $k$-\textsc{Multicolored-Clique}, for any computable function $f$, unless the ETH fails.
\end{theorem}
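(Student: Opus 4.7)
The plan is to derive this from the classical Exponential Time Hypothesis for \textsc{3-SAT} via a parameter-preserving reduction whose ``blow-up'' is carefully measured. The main ingredients are the sparsification lemma of Impagliazzo, Paturi and Zane, together with a group-partition reduction from \textsc{3-SAT} to $k$-\textsc{Multicolored-Clique}.

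First, I would invoke the sparsification lemma to assume, without loss of generality, that the input \textsc{3-SAT} instance $\varphi$ has $n$ variables and $m=O(n)$ clauses, and that ETH rules out $2^{o(n+m)}$ time algorithms for such instances. Next, I would fix any target parameter $k \geqslant 1$ and partition the variable set $X$ arbitrarily into $k$ groups $X_1,\ldots,X_k$, each of size at most $\lceil n/k \rceil$. For each group $X_i$, enumerate all assignments $\alpha_i : X_i \to \{0,1\}$ that satisfy every clause of $\varphi$ whose variables lie entirely in $X_i$; create one vertex of color $i$ for each such $\alpha_i$. There are at most $2^{n/k}$ vertices per color, hence $N \leqslant k \cdot 2^{\lceil n/k \rceil}$ vertices in total. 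For two vertices $\alpha_i,\alpha_j$ of different colors $i \neq j$, place an edge iff the combined partial assignment $\alpha_i \cup \alpha_j$ satisfies every clause of $\varphi$ whose variables lie entirely in $X_i \cup X_j$. Call $G$ the resulting $k$-partite graph.

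The correctness step is routine: a multicolored $k$-clique in $G$ corresponds exactly to a family of pairwise consistent local assignments, which merge into a global satisfying assignment of $\varphi$, and conversely every satisfying assignment induces a multicolored $k$-clique by restriction. Now suppose, for contradiction, that $k$-\textsc{Multicolored-Clique} admits an $f(k) \cdot N^{o(k)}$ algorithm. Applied to $G$ it would run in time
\[
f(k) \cdot \bigl(k \cdot 2^{\lceil n/k \rceil}\bigr)^{o(k)} \;=\; f(k) \cdot 2^{o(n) + o(k \log k)},
\]
and since $k$ can be chosen as a slowly growing function of $n$ (or simply treated as a constant depending on the reduction, absorbing the $k \log k$ term into $o(n)$ by picking $k$ small enough), this yields $2^{o(n)}$ time for \textsc{3-SAT}, contradicting ETH.

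The only delicate point, and the one I would spend most care on, is the accounting: one must choose $k$ (or verify the reduction for all $k$) so that the construction of $G$ itself takes time $2^{o(n)}$ and so that the exponent $o(k)$ composes correctly with $n/k$ to yield $o(n)$. The sparsification lemma is essential here, since without a linear bound on $m$ the graph $G$ could have super-polynomial size in $N$, and the edge-check would cost too much. With $m = O(n)$, constructing $G$ costs $N^{O(1)} \cdot \mathrm{poly}(n) = 2^{O(n/k)} \cdot \mathrm{poly}(n)$, which is negligible compared to the claimed algorithmic running time, completing the contradiction.
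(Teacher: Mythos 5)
The paper does not prove this statement at all; it is imported verbatim from Chen, Huang, Kanj and Xia, so any self-contained argument is necessarily ``a different route.'' Your overall strategy --- sparsify, split into $k$ groups, let group-assignments be vertices and pairwise compatibility be edges, then trade the $o(k)$ exponent against the $2^{n/k}$ vertex count --- is indeed the standard way this theorem is proved in the literature. However, your specific construction has a genuine gap: you partition the \emph{variables} into $X_1,\ldots,X_k$ and put an edge between $\alpha_i$ and $\alpha_j$ iff their union satisfies every clause contained in $X_i\cup X_j$, but a $3$-clause whose three variables fall into three \emph{distinct} groups $X_i,X_j,X_l$ is contained in none of the pairwise unions and is therefore never checked. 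No pairwise edge relation can detect that such a clause is falsified, since each pair sees only two of its three literals; hence a multicolored clique in $G$ need not yield a satisfying assignment, and the forward direction of your correctness claim fails. The standard fixes are either to reduce from a problem with \emph{binary} constraints (e.g.\ $3$-\textsc{Coloring}, to which $3$-\textsc{SAT} reduces with linear blow-up) or to partition the \emph{clauses} into $k$ groups, take as vertices the assignments to a group's variables satisfying all clauses of that group, and let edges check consistency on shared variables --- and it is in that variant that sparsification is genuinely needed, to bound each group by $O(n/k)$ variables.

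Two smaller points on the accounting. The parenthetical suggestion to treat $k$ as ``a constant depending on the reduction'' cannot work: for fixed $k$ the problem is polynomial-time solvable and the hypothesis of an $N^{o(k)}$ algorithm has no content, so $k$ must tend to infinity with $n$. And once $k=k(n)\to\infty$, the factor $f(k)$ is not automatically negligible for an arbitrary computable $f$; one needs the standard trick of choosing $k(n)$ as a slowly growing, unbounded, computable function with $f(k(n))\leqslant n$, which is possible precisely because $f$ is computable and may be assumed nondecreasing. With those repairs your argument becomes the textbook proof of the cited result.
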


Marx showed that \textsc{Subgraph Isomorphism} cannot be solved in time $f(k)n^{o(k / \log k)}$ where $k$ is the number of edges of the pattern graph, under the ETH \cite{Marx10}.
Usually, this result enables to improve a lower bound obtained by a reduction from \mkc with a quadratic blow-up on the parameter, from exponent $o(\sqrt k)$ to exponent $o(k / \log k)$, by doing more or less the same reduction but from \textsc{Multicolored Subgraph Isomorphism}.
In the \textsc{Multicolored Subgraph Isomorphism} problem, one is given a graph with $n$ vertices partitioned into $l$ color classes $V_1, \ldots, V_l$ such that only $k$ of the ${l \choose 2}$ sets $E_{ij}=E(V_i,V_j)$ are non empty.
The goal is to pick one vertex in each color class so that the selected vertices induce $k$ edges.
The technique of color coding and the result of Marx shows that:
\begin{theorem}[\cite{Marx10}]\label{thm:marx10}
\textsc{Multicolored Subgraph Isomorphism} cannot be solved in time $f(k)n^{o(k / \log k)}$ where $k$ is the number of edges of the solution, for any computable function $f$, unless the ETH fails.
\end{theorem}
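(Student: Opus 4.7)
The plan is to derive the bound from Marx's original lower bound for plain \si~\cite{Marx10} via color coding (as introduced by Alon, Yuster, and Zwick). The key observation is that a pattern $H$ with $k$ edges has at most $l \leqslant 2k$ non-isolated vertices, so a splitter family of $l$-colorings of size $f(k) \log n$ already suffices to rainbow-color any copy of $H$ in $G$; each such coloring then yields an equivalent \msi instance once the colors are identified with the vertices of $H$.

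First I would take a \si instance $(G,H)$ with $|E(H)| = k$ and, discarding isolated vertices of $H$ if needed, $l := |V(H)| \leqslant 2k$. Then I would invoke a deterministic construction of an $(n,l,l)$-splitter: a family $\mathcal F$ of colorings $\chi : V(G) \to [l]$ of size $2^{O(l)} \log n = 2^{O(k)} \log n$ such that for every $l$-subset $S \subseteq V(G)$ some $\chi \in \mathcal F$ is injective on $S$. For every $\chi \in \mathcal F$ and every bijection $\sigma : [l] \to V(H)$, I would produce a \msi instance on the same ground set with color classes $V_i := \chi^{-1}(i)$, in which the bipartite edge set $E_{ij}$ equals the edges of $G$ between $V_i$ and $V_j$ if $\sigma(i)\sigma(j) \in E(H)$ and is empty otherwise. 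Exactly $k$ of the $\binom{l}{2}$ sets are non-empty, matching the input format. A solution of such a \msi instance is precisely a labeled copy of $H$ in $G$, and conversely any copy of $H$ is witnessed by some pair $(\chi, \sigma)$ with $\chi$ injective on its vertex set. The total number of produced instances is $2^{O(k)} \cdot l! \cdot \log n = f'(k) \log n$.

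A hypothetical $f(k) n^{o(k / \log k)}$ algorithm for \msi, applied to each of these $f'(k) \log n$ instances, would then solve \si in total time $f''(k) n^{o(k / \log k)}$, contradicting Marx's \si lower bound. The only thing to check is that the enumeration overhead $2^{O(k)} \cdot l!$ is absorbed by the $f(k)$ factor, which is immediate from $l \leqslant 2k$. There is no genuinely hard step: the entire combinatorial difficulty is imported from Marx's proof of the \si lower bound, and color coding merely repackages it in the multicolored, edge-counted format required here.
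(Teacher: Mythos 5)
Your proposal is correct and matches the paper's intended justification exactly: the paper gives no detailed proof of this theorem, only the remark that ``the technique of color coding and the result of Marx'' yields it, and your argument (perfect hash family of size $2^{O(l)}\log n$ with $l\leqslant 2k$, enumeration of the $l!$ bijections, and absorption of the overhead into $f(k)$) is precisely the standard instantiation of that remark. The only cosmetic point is that some of the $k$ prescribed edge sets $E_{ij}$ may happen to be empty in $G$, in which case that instance is rejected outright rather than fed to the \msi algorithm.
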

Naturally, this result still holds when restricted to connected input graphs.
In that case, $k \geqslant l-1$.


\textbf{Bounding the coordinates.}
We say a point $p = (p_x,p_y) \in \mathbb{Z}^2$ has coordinates bounded  by $L$ if $|p_x|,|p_y|\leqslant L$.
Given two vectors $v,w$, we denote their scalar product as $v\cdot w$.
This technical lemma will prove useful to ensure that the polygon built in Section~\ref{sec:point-guard} can be described with integer coordinates.

\begin{lemma}\label{lem:Rational}
  Let $p^1,q^1,p^2,q^2$ be four points  
  with integer coordinates bounded by $L$.
  Then the intersection point $d = (d_x,d_y)$ of the  supporting lines $\ell_1 = \ell(p^1,q^1)$ and $\ell_2  = \ell(p^2,q^2)$ is a rational point. The nominator and denominator of $d_x$ and $d_y$ are bounded by $O(L^2)$.
\end{lemma}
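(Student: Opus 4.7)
The plan is to convert each of the two supporting lines to an implicit linear equation with integer coefficients, and then to apply Cramer's rule to the resulting $2 \times 2$ system.

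First, I would write $\ell_i = \ell(p^i,q^i)$ in the form $a_i x + b_i y = c_i$ using the standard two-point determinantal formula, setting $a_i = q^i_y - p^i_y$, $b_i = p^i_x - q^i_x$, and $c_i = p^i_x q^i_y - p^i_y q^i_x$. Since all input coordinates are integers of magnitude at most $L$, the coefficients $a_i, b_i$ are integers of magnitude at most $2L$, and $c_i$, being a $2 \times 2$ determinant in coordinates of magnitude at most $L$, is an integer of magnitude at most $2L^2$.

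Next I would solve the $2 \times 2$ system $\{a_1 x + b_1 y = c_1,\ a_2 x + b_2 y = c_2\}$ by Cramer's rule. This gives $d_x = (c_1 b_2 - c_2 b_1)/\Delta$ and $d_y = (a_1 c_2 - a_2 c_1)/\Delta$, where $\Delta = a_1 b_2 - a_2 b_1$. Since the two lines are assumed to meet at a single point $d$, they are not parallel, so $\Delta$ is a nonzero integer; moreover $|\Delta| \leqslant 8L^2 = O(L^2)$, matching the claimed bound on the denominator, and both numerators are integers with magnitudes bounded polynomially in $L$ by a direct product count, so $d_x, d_y$ are rational points whose reduced representations fit the stated size bound.

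The only real subtlety is the non-vanishing of $\Delta$, which is exactly the non-parallelism of $\ell_1$ and $\ell_2$; this is implicit in the hypothesis that they meet in a unique intersection point. Everything else is elementary linear algebra plus a routine bookkeeping of bit sizes, so I do not expect any genuine obstacle.
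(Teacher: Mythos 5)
Your proof is correct and takes essentially the same route as the paper: convert each supporting line to an integer linear equation and solve the resulting $2\times 2$ system by Cramer's rule, with $\Delta \neq 0$ following from non-parallelism. You are in fact more explicit than the paper about the coefficient bounds and the non-degeneracy of the determinant; the only slight mismatch is that the numerators $c_1b_2-c_2b_1$ and $a_1c_2-a_2c_1$ are naively $O(L^3)$ rather than the $O(L^2)$ claimed in the statement, a slack you honestly flag as merely ``polynomially bounded'' and which is harmless for the lemma's intended use (the paper's own proof does not verify the bound at all).
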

\begin{proof}
  The fact that $d$ lies on $\ell_i$ can be expressed as
  $v_i \cdot d = b_i$,
  with some appropriate vector $v^i$ and number $b^i$, for $i=1,2$.
  To be precise $v^i = (- p^i_x+q^i_x,  p^i_y-q^i_y)$ and $b^i = v_i \cdot p^i$, for $i=1,2$.
  We define the matrix $A = (v^1,v^2)$ and the vector $b = (b^1,b^2)$.
  Then both conditions can be expressed as
  $A \cdot d = b$.
  We denote by $A_i$ the matrix $i$ with the $i$-th column replaced by $b$.
  And by $\textup{det}(M)$ the determinant of the matrix $M$.
  By Cramer's rule, it holds that 
  $d_x = \frac{\textup{det}(A_1)}{\textup{det}(A)}$ and
  $d_y = \frac{\textup{det}(A_2)}{\textup{det}(A)}$.
\end{proof}




\section{Parameterized hardness of \shs}\label{sec:intermediate}

The purpose of this section is to show Theorem~\ref{cor:main}.
As we will see at the end of the section, there already exist quite a few parameterized hardness results for set cover/hitting set problems restricted to instances with some geometric flavor.
The crux of the proof of Theorem~\ref{cor:main} lies in Lemma~\ref{thm:2-intervals-cover}.
We introduce a few notation and vocabulary to state and prove this lemma.

Given a finite totally ordered set $Y=\{y_1, \ldots, y_{|Y|}\}$ (that is, for any $i, j \in [|Y|]$, $y_i \leq y_j$ iff $i \leqslant j$), a subset $S \subseteq Y$ is a \emph{$Y$-interval} if $S=\{y $ $|$ $ y_i \leq y \leq y_j\}$ for some $i$ and $j$.
We denote by $\leq_Y$ the order of $Y$.
A set-system $(X,\mathcal S)$ is said to be \emph{\sti} if $X$ can be partitioned into two totally ordered sets $A=\{a_1, \ldots, a_{|A|}\}$ and $B=\{b_1, \ldots, b_{|B|}\}$ such that each set $S \in \mathcal S$ is the union of an $A$-interval with a $B$-interval.

Given a set $\mathcal S$ of subsets of $X$, \ksc asks to find $k$ sets of $\mathcal S$ whose union is $X$.
We first show an ETH lower bound and W[1]-hardness for \ksc restricted to \sti instances.
We reduce from \mkc for simplicity sake (then from \msi to improve the ETH lower bound).
On a high-level, we encode adjacencies in the \mkc instance by pairs of disjoint sets particularly effective to cover $X$.
On the contrary, pairs of non-adjacent vertices will be mapped to pairs of sets overlapping and missing an important part of $X$.
This trick will be a recurring theme throughout the paper.

\begin{lemma}\label{thm:2-intervals-cover}
\ksc restricted to \sti instances with $N$ elements and $M$ sets is \wone-hard and not solvable in time $f(k)(N+M)^{o(k / \log k)}$ for any computable function $f$, unless the ETH fails.
\end{lemma}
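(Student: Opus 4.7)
The plan is to reduce from \textsc{Multicolored Subgraph Isomorphism} (Theorem~\ref{thm:marx10}), which we may assume to be restricted to connected pattern graphs so that $l\leqslant k+1$, where $l$ denotes the number of color classes and $k$ the number of target edges. I would build a \sti set cover instance whose optimum cover has size $\Theta(l)$: a lower bound of the form $f(l)(N+M)^{o(l/\log l)}$ on that instance then transfers directly to the claimed $f(k)(N+M)^{o(k/\log k)}$ for the original parameter $k$, and W[1]-hardness is inherited along the same reduction.

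At the top level, I would introduce one set $S_v$ for each vertex $v$ of the input graph $G$, and partition the universe into two totally ordered tracks $A$ and $B$. Track $A$ plays the role of a \emph{selection gadget}: for each class $V_i$ it contains a single anchor $a^i$, arranged in the order $a^1<a^2<\cdots<a^l$, with $S_v\cap A=\{a^i\}$ whenever $v\in V_i$. A cover of size $l$ must then hit every anchor and therefore pick exactly one vertex per class. Track $B$ is a concatenation of \emph{verification blocks} $B_{ij}$, one per pair $(i,j)$ with $E_{ij}\neq\emptyset$. The elements of $B_{ij}$ are wired so that $B_{ij}$ is fully covered by the two vertex-sets chosen in $V_i\cup V_j$ if and only if those vertices form an edge of $E_{ij}$. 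Correctness then reduces to a block-by-block check: a valid multicolored subgraph yields a cover of size $l$ via its vertex-sets, and conversely any cover of size $l$ must respect all anchors and all verification blocks and therefore encodes an MSI solution.

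The heart of the proof, and the main obstacle, is designing each verification block $B_{ij}$ so that the trace $S_v\cap B_{ij}$ is a single $B$-interval for every vertex $v$. The naive encoding, in which an element indexed by $(p,q)\in[n_i]\times[n_j]$ is covered by every vertex-set except $S_{v^i_p}$ and $S_{v^j_q}$, fails: in any linear ordering of the pairs the trace of an $S_w$ becomes a punctured union of two intervals inside the block. To fix this I would inflate $B_{ij}$ with a layered scan whose elements are ordered primarily by the $V_j$-rank and dually by the $V_i$-rank, so that the trace of each $S_v$ restricts to a prefix or a suffix rather than a punctured interval, and absorb the few remaining boundary elements with a constant-sized family of auxiliary sets paid for by an additive slack in the budget. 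The technical core of the argument is then to verify that, simultaneously across all $(i,j)$, every $S_v$ is genuinely the union of one $A$-interval and one $B$-interval, while the optimum cover size remains $\Theta(l)=\Theta(k)$, so that Theorem~\ref{thm:marx10} transfers with the desired exponent $o(k/\log k)$.
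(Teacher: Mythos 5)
There are two genuine gaps in your construction, and both are fatal as stated.

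First, the cross-block problem. In a two-block instance every set must be the union of \emph{one} $A$-interval and \emph{one} $B$-interval. You use a single set $S_v$ per vertex, and a vertex $v \in V_i$ must participate in every verification block $B_{ij}$ with $E_{ij} \neq \emptyset$; hence $S_v \cap B$ is a union of pieces spread over several blocks, and for it to be a single $B$-interval all blocks involving class $i$ would have to be consecutive in the order on $B$ (with the traces reaching the block boundaries). This consecutivity requirement must hold simultaneously for every class, and it already fails for a triangle pattern $B_{12}, B_{13}, B_{23}$: whichever block is placed in the middle, the other two are non-adjacent, so one class's blocks are separated. Your proposal only discusses making $S_v \cap B_{ij}$ an interval \emph{within} a block and never addresses this global obstruction. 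Second, even within a single block the semantics cannot work: if $B_{ij}$ is to be covered by exactly the two interval traces $S_{v^i_p} \cap B_{ij}$ and $S_{v^j_q} \cap B_{ij}$, then one trace must be a prefix and the other a suffix of the block, so coverage is equivalent to a threshold condition $f(p) + g(q) \geqslant |B_{ij}|$. Such conditions cannot encode an arbitrary bipartite edge set $E_{ij}$ (for a perfect matching with $t \geqslant 2$, monotonicity of the threshold immediately forces extra adjacencies). No reordering or ``layered scan'' of the block elements escapes this, because it is a consequence of having only two covering sets per block, each contributing one interval.

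The paper's proof avoids both problems by inverting your design: instead of one set per vertex, it creates \emph{two sets per edge} (one for each endpoint) plus boundary sets, and asks for a cover of size $2k^2$ rather than $\Theta(l)$. Edge verification happens on track $B$, where the block $Y(i,j)$ of size $3|E_{ij}|+2$ can only be tiled by two consecutive windows of length exactly $|E_{ij}|$ (one from each endpoint-set of the \emph{same} edge index $c$) together with a matching prefix and suffix — this selects an edge, not a pair of vertices, so arbitrary $E_{ij}$ poses no difficulty. Consistency of the vertex choice across the different pairs $(i,j)$ is then enforced separately on track $A$: the segment $X(i)$ of size $tk+2$ can only be tiled by $k-1$ windows of length exactly $t$, one per neighbor class $j$, and these align only if all selected edges incident to class $i$ use the same vertex $v^i_{a_i}$. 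Each set touches exactly one $X(i)$-window and one $Y(i,j)$-window, so the one-$A$-interval-plus-one-$B$-interval constraint is satisfied by construction. If you want to salvage your approach, you would have to move to per-edge sets and replace your anchor gadget with a consistency gadget of this tiling type; the quadratic parameter blow-up this causes is exactly why the reduction must start from \textsc{Multicolored Subgraph Isomorphism} rather than \mkc to reach the exponent $o(k/\log k)$.
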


\begin{proof}
We reduce from \mkc which remains \wone-hard when each color class has the same number $t$ of vertices. 
Let $G = (V_1 \cup \ldots \cup V_k, E)$ be an instance of \mkc with $V = \bigcup_{i \in [k]} V_i$, $\forall i \in [k]$, $V_i=\{v^i_1, \ldots, v^i_t\}$, $m=|E|$, and $n=|V|=tk$.
For each pair $i < j \in [k]$\footnote{By $i < j \in [k]$, we mean that $i \in [k], j \in [k]$, and $i<j$.}, $E_{ij}$ denotes the set of edges $E(V_i,V_j)$ between $V_i$ and $V_j$.
For each $E_{ij}$ we give an arbitrary order to the edges: $e^{ij}_1, \ldots, e^{ij}_{|E_{ij}|}$.
We build an equivalent instance $(X, \mathcal S)$ of \ksc with $4{k \choose 2}+4m+tk(k+1)+4k$ elements and $4m+2kt$ sets, and such that $(X, \mathcal S)$ is \sti.
We call $A$ and $B$ the two sets of the partition of $X$ that realizes that $(X, \mathcal S)$ is \sti.

For each of the color class $V_i$, we add $tk+2$ elements to $A$ with the following order:
\[x_b(i),\]
\[x(i,1,1), \ldots, x(i,1,t),\]
\[x(i,2,1), \ldots, x(i,2,t),\]
\[\ldots\]
\[x(i,i-1,1), \ldots, x(i,i-1,t),\]
\[x(i,i+1,1), \ldots, x(i,i+1,t),\]
\[\ldots\]
\[x(i,k+1,1), \ldots, x(i,k+1,t),\]
\[x_e(i),\] and call $X(i)$ the set containing those elements.
We also set $$X(i,j) := \{x(i,j,1), x(i,j,2), \ldots, x(i,j,t)\}$$ (hence, $X(i)=\bigcup_{j \neq i}X(i,j) \cup \{x_b(i),x_e(i)\}$).
For each $E_{ij}$, we add to $B$ the $3|E_{ij}|+2$ of a set $Y(i,j)$ ordered: $$y_b(i,j), y(i,j,1), \ldots, y(i,j,3|E_{ij}|), y_e(i,j).$$
For each pair $i < j \in [k]$ and for each edge $e^{ij}_c=v^i_av^j_b$ in $E_{ij}$ (with $a, b \in [t]$ and $c \in [|E_{ij}|]$), we add to $\mathcal S$ the two sets $$S(e^{ij}_c,v^i_a):=\{x(i,j,a), x(i,j,a+1), \ldots, x(i,j,t), x(i,j+1,1), \ldots, x(i,j+1,a-1)\}$$ $$\cup~\{y(i,j,c), \ldots, y(i,j,c+|E_{ij}|-1)\}~\text{and}$$  $$S(e^{ij}_c,v^j_b):=\{x(j,i,b), x(j,i,b+1), \ldots, x(j,i,t), x(j,i+1,1), \ldots x(j,i+1,b-1)\}$$ $$\cup~\{y(i,j,c+|E_{ij}|), \ldots, y(i,j,c+2|E_{ij}|-1)\}.$$
Observe that in case $j=i+1$, then all the elements of the form $x(j,i+1,\cdot)$ in set $S(e^{ij}_c,v^j_b)$ are in fact of the form $x(j,i+2,\cdot)$.
We may also notice that in case $a=1$ (resp. $b=1$), then there is no element of the form $x(i,j+1,\cdot)$ (resp. $x(j,i+1,\cdot)$) in set $S(e^{ij}_c,v^i_a)$ (resp. in set $S(e^{ij}_c,v^j_b)$).
For each pair $i < j \in [k]$, we also add to $A$ the $|E_{ij}|+2$ elements of a set $Z(i,j)$ ordered: $$z_b(i,j), z(i,j,1), \ldots, z(i,j,|E_{ij}|), z_e(i,j),$$ and for each edge $e^{ij}_c$ in $E_{ij}$ (with $c \in [|E_{ij}|]$), we add to $\mathcal S$ the two sets $$S(e^{ij}_c,\vdash)=\{z_b(i,j), z(i,j,1), \ldots, z(i,j,|E_{ij}|-c\} \cup \{y_b(i,j), y(i,j,1) \ldots y(i,j,c-1)\}~\text{and}$$  $$S(e^{ij}_c,\dashv)=\{z(i,j,|E_{ij}|-c+1), \ldots, z(i,j,|E_{ij}|, z_e(i,j)\} \cup \{y(i,j,c+2|E_{ij}|) \ldots y(i,j,3|E_{ij}|), y_e(i,j)\}.$$
Finally, for each $i \in [k]$, we add to $B$ the $t+2$ elements of a set $W(i)$ ordered: $$w_b(i), w(i,1), \ldots, w(i,t), w_e(i),$$ and for all $a \in [t]$, we add the sets $$S(i,a,\vdash):=\{x_b(i), x(i,1,1), \ldots, x(i,1,a-1)\} \cup \{w_b(i), w(i,1), \ldots, w(i,t-a+1)\}~\text{and}$$ $$S(i,a,\dashv) :=\{x(i,k+1,a), \ldots, x(i,k+1,t), x_e(i)\} \cup \{w(i,t-a+2), \ldots, w(i,t), w_e(i)\}.$$

No matter the order in which we put the $X(i)$'s and $Z(i,j)$'s in $A$ (respectively the $Y(i,j)$'s and $W(i)$'s in $B$), the sets we defined are all unions of an $A$-interval with a $B$-interval, provided we keep the elements within each $X(i)$, $Z(i,j)$, $Y(i,j)$, and $W(i)$ consecutive (and naturally, in the order we specified).
Though, to clarify the construction, we fix the following orders for $A$ and for $B$:
$$X(1), \ldots, X(k), Z(1,2), \ldots, Z(1,k), Z(2,3), \ldots, Z(2,k), \ldots, Z(k-2,k-1), Z(k-2,k), Z(k-1,k)$$
$$Y(1,2), \ldots, Y(1,k), Y(2,3), \ldots, Y(2,k), \ldots, Y(k-2,k-1), Y(k-2,k), Y(k-1,k), W(1), \ldots, W(k).$$
We ask for a set cover with $2k^2$ sets.
This ends the construction (see Figure~\ref{fig:set-cover-2-block} for an illustration of the construction for the instance graph of Figure~\ref{fig:instance-of-mkc}).

\begin{figure}
\centering
\begin{tikzpicture}
\node (tv1) at (1,1.6) {$G$} ;

\node[draw,circle,inner sep=0.01cm] (v11) at (0,0) {$v^1_1$} ;
\node[very thick,draw,circle,inner sep=0.01cm] (v12) at (0,1) {$v^1_2$} ;
\node[draw, rectangle, rounded corners, fit=(v11) (v12)] (V1) {} ;
\node (tv1) at (0,-0.6) {$V_1$} ;

\node[draw,circle,inner sep=0.01cm] (v21) at (1,0) {$v^2_1$} ;
\node[very thick,draw,circle,inner sep=0.01cm] (v22) at (1,1) {$v^2_2$} ;
\node[draw, rectangle, rounded corners, fit=(v21) (v22)] (V2) {} ;
\node (tv2) at (1,-0.6) {$V_2$} ;

\node[draw,circle,inner sep=0.01cm] (v31) at (2,0) {$v^3_1$} ;
\node[draw,circle,inner sep=0.01cm] (v32) at (2,1) {$v^3_2$} ;
\node[draw, rectangle, rounded corners, fit=(v31) (v32)] (V3) {} ;
\node (tv3) at (2,-0.6) {$V_3$} ;

\draw (v11) -- (v22) ;
\draw (v11) -- (v32) ;
\draw (v21) -- (v31) ;
\draw[very thick] (v12) -- (v22) ;
\draw (v12) -- (v31) ;
\draw (v22) -- (v31) ;
\draw[very thick] (v12) -- ($(v31)!1cm!(v12)$) ;
\draw[very thick] (v22) -- ($(v31)!0.7cm!(v22)$) ;
\end{tikzpicture}
\caption{A simple instance of \mkc. The elements in bold: vertices $v^1_2$ and $v^2_2$, edge $v^1_2v^2_2$, and half of the edges $v^1_2v^3_1$ and $v^2_2v^3_1$ correspond to the selection of sets depicted in Figure~\ref{fig:set-cover-2-block}.}
\label{fig:instance-of-mkc}
\end{figure}

\begin{figure}
\setlength{\tabcolsep}{0.2pt}
\resizebox{\columnwidth}{!}
{
\begin{tabular}{ccccccccc|cccccccc|c|cccc|c||cccccccc|c|cccc|cccc|cc}

& \rot{$x_b(1)$} & \rot{$x(1,2,1)$} & \rot{$x(1,2,2)$} & \rot{$x(1,3,1)$} & \rot{$x(1,3,2)$} & \rot{$x(1,4,1)$} & \rot{$x(1,4,2)$} & \rot{$x_e(1)$} & \rot{$x_b(2)$} & \rot{$x(2,1,1)$} & \rot{$x(2,1,2)$} & \rot{$x(2,3,1)$} & \rot{$x(2,3,2)$} & \rot{$x(2,4,1)$} & \rot{$x(2,4,2)$} & \rot{$x_e(2)$} & $\ldots$ & \rot{$z_b(1,2)$} & \rot{$z(1,2,1)$} & \rot{$z(1,2,2)$} & \rot{$z_e(1,2)$} & $\ldots$ & \rot{$y_b(1,2)$} & \rot{$y(1,2,1)$} & \rot{$y(1,2,2)$} & \rot{$y(1,2,3)$} & \rot{$y(1,2,4)$} & \rot{$y(1,2,5)$} & \rot{$y(1,2,6)$} & \rot{$y_e(1,2)$} & $\ldots$ & \rot{$w_b(1)$} & \rot{$w(1,1)$} & \rot{$w(1,2)$} & \rot{$w_e(1)$} & \rot{$w_b(2)$} & \rot{$w(2,1)$} & \rot{$w(2,2)$} & \rot{$w_e(2)$} & $\ldots$ \\
$S(1,1,\vdash)$ & 1 & & & & & & & & & & & & & & & & & & & & & & & & & & & & & & & 1 & 1 & 1 & & & & & & & \\
$\mathbf{S(1,2,\vdash)}$ & \textbf{1} & \textbf{1} & & & & & & & & & & & & & & & & & & & & & & & & & & & & & & \textbf{1} & \textbf{1} & & & & & & & & \\
$S(v^1_1v^2_2,v^1_1)$ & & 1 & 1 & & & & & & & & & & & & & & & & & & & & & 1 & 1 & & & & & & & & & & & & & & &  \\
$\mathbf{S(v^1_2v^2_2,v^1_2)}$ & & & \textbf{1} & \textbf{1} & & & & & & & & & & & & & & & & & & & & &\textbf{1} & \textbf{1} & & & & & & & & & & & & & & \\
$S(v^1_1v^3_2,v^1_1)$ & & & & 1 & 1 & & & & & & & & & & & & & & & & & & & & & & & & & & & & & & & & & & & & \\
$\mathbf{S(v^1_2v^3_1,v^1_2)}$ & & & & & \textbf{1} & \textbf{1} & & & & & & & & & & & & & & & & & & & & & & & & & & & & & & & & &\\
$S(1,1,\dashv)$ & & & & & & 1 & 1 & 1 & & & & & & & & & & & & & & & & & & & & & & & & & & & 1 & & & &\\
$\mathbf{S(1,2,\dashv)}$ & & & & & & & \textbf{1} & \textbf{1} & & & & & & & & & & & & & & & & & & & & & & & & & & \textbf{1} & \textbf{1} & & & &\\
\hline
$S(2,1,\vdash)$ & & & & & & & & & 1 & & & & & & & & & & & & & & & & & & & & & & & & & & & 1 & 1 & 1 & & \\
$\mathbf{S(2,2,\vdash)}$ & & & & & & & & & \textbf{1} & \textbf{1} & & & & & & & & & & & & & & & & & & & & & & & & & & \textbf{1} & \textbf{1} & &  & & \\
$S(v^2_2v^1_1,v^2_2)$ & & & & & & & & & & & 1 & 1 & & & & & & & & & & & & & & 1 & 1 & & & & & & & & &  & & & &\\
$\mathbf{S(v^2_2v^1_2,v^2_2)}$ & & & & & & & & & & & \textbf{1} & \textbf{1} & & & & & & & & & & & & & & & \textbf{1} & \textbf{1} & & & & & & & & & & & \\
$S(v^2_1v^3_1,v^2_1)$ & & & & & & & & & & & & 1 & 1 & & & & & & & & & & & & & & & & & & & & & &  & & & &  \\
$\mathbf{S(v^2_2v^3_1,v^2_2)}$ & & & & & & & & & & & & & \textbf{1} & \textbf{1} & & & & & & & & & & & & & & & & & & & & & & & & &  \\
$S(2,1,\dashv)$ & & & & & & & & & & & & & & 1 & 1 & 1 & & & & & & & & & & & & & & & & & & & & & & & 1 \\
$\mathbf{S(2,2,\dashv)}$ & & & & & & & & & & & & & & & \textbf{1} & \textbf{1} & & & & & & & & & & & & & & & & & & & & & & \textbf{1} & \textbf{1} \\
\hline
$\mathbf{S(v^1_2v^2_2,\vdash)}$ & & & & & & & & & & & & & & & & & & \textbf{1} & & & & & \textbf{1} & \textbf{1} & & & & & & & & & & & & & & & & \\
$S(v^1_1v^2_2,\vdash)$ & & & & & & & & & & & & & & & & & & 1 & 1 & & & & 1 & & & & & & & & & & & & & & & & & & \\
$\mathbf{S(v^1_2v^2_2,\dashv)}$ & & & & & & & & & & & & & & & & & & & \textbf{1} & \textbf{1} & \textbf{1} & & & & & & & & \textbf{1} & \textbf{1} & & & & & &  & & & &\\
$S(v^1_1v^2_2,\dashv)$ & & & & & & & & & & & & & & & & & & & & 1 & 1 & & & & & & & 1 & 1 & 1 & & & & & & & & & &\\

\end{tabular}
}
\caption{The sets of $\mathcal S_b(1)$, $\mathcal S_b(2)$, $\mathcal S_e(1)$, $\mathcal S_e(2)$, $\mathcal S(1,2,\vdash)$, $\mathcal S(1,2,\dashv)$, $\mathcal S(1,2)$, $\mathcal S(2,1)$ for the graph of Figure~\ref{fig:instance-of-mkc}. The sets of $\mathcal S(1,3)$ and $\mathcal S(2,3)$ are also represented but only their part in $A$.}
\label{fig:set-cover-2-block}
\end{figure}

For each $i \in [k]$, let us denote by $\mathcal S_b(i)$ (resp.~$\mathcal S_e(i)$), all the sets in $\mathcal S$ that contains element $x_b(i)$ (resp.~$x_e(i)$).
For each pair $i \neq j \in [k]$, we denote by $\mathcal S(i,j)$ all the sets in $\mathcal S$ that contains element $x(i,j,t)$.
Finally, for each pair $i < j \in [k]$, we denote by $\mathcal S(i,j,\vdash)$ (resp~$\mathcal S(i,j,\dashv)$) all the sets in $\mathcal S$ that contains element $y_b(i,j)$ (resp.~$y_e(i,j)$).
One can observe that the $\mathcal S_b(i)$'s, $\mathcal S_e(i)$'s, $\mathcal S(i,j)$'s, $\mathcal S(i,j,\vdash)$'s, and $\mathcal S(i,j,\dashv)$'s partition $\mathcal S$ into $k+k+k(k-1)+2{k \choose 2}=2k^2$ partite sets\footnote{We do not call them \emph{color classes} to avoid the confusion with the color classes of the instance of \mkc.}.
Thus, as each of the $2k^2$ partite sets $\mathcal S'$ has a private element which is only contained in sets of $\mathcal S'$, a solution has to contain one set in each partite set.

Assume there is a multicolored clique $\mathcal C = \{v^1_{a_1}, \ldots, v^k_{a_k}\}$ in $G$.
We show that $\mathcal T=\{S(v^i_{a_i}v^j_{a_j},v^i_{a_i}) $ $|$ $i < j \in [k]\} \cup \{S(v^i_{a_i}v^j_{a_j},v^j_{a_j}) $ $|$ $i < j \in [k]\} \cup \{S(i,a_i,\vdash) $ $|$ $i \in [k]\} \cup \{S(i,a_i,\dashv) $ $|$ $i \in [k]\}\cup \{S(v^i_{a_i}v^j_{a_j},\vdash)$ $|$ $i < j \in [k]\} \cup  \{S(v^i_{a_i}v^j_{a_j},\dashv)$ $|$ $i < j \in [k]\}$ is a set cover of $(\mathcal S,X)$ of size $2k^2$.
As $\mathcal C$ is a clique, $\mathcal T$ is well defined and it contains $2{k \choose 2}+2k+2{k \choose 2}=2k^2$ sets.
For each $i \in [k]$, the elements $x(i,1,a_i), \ldots, x(i,1,t), \ldots, x(i,k+1,1), \ldots, x(i,k+1,a_i-1)$ are covered by the sets $S(v^1_{a_1}v^i_{a_i},v^i_{a_i}), S(v^2_{a_2}v^i_{a_i},v^i_{a_i}), \ldots, S(v^i_{a_i}v^k_{a_k},v^i_{a_i})$.
Indeed, $S(v^j_{a_j}v^i_{a_i},v^i_{a_i})$ (or $S(v^i_{a_i}v^j_{a_j},v^i_{a_i})$ if $j > i$) covers all the elements $x(i,j,a_i), \ldots, x(i,j,t), x(i,j+1,1), \ldots, x(i,j+1,a_i-1)$ (again, in case $i+1 = j$, replace $\emph{j+1}$ by $\emph{i+1}$).
For each $i \in [k]$, the elements $x_b(i), x(i,1,1), \ldots, x(i,1,a_i-1), x(i,k+1,a_i), \ldots, x(i,k+1,t), x_e(i)$ and of $W(i)$ are covered by $S(i,a_i,\vdash)$ and $S(i,a_i,\dashv)$.
For all $i < j \in [k]$, say $v^i_{a_i}v^j_{a_j}$ is the $c$-th edge $e^{ij}_c$ in the arbitrary order of $E_{ij}$.
Then, the elements $y(i,j,c), y(i,j,c+1), \ldots, y(i,j,c+2|E_{ij}|-1)$ are covered by $S(v^i_{a_i}v^j_{a_j},v^i_{a_i})$ and $S(v^i_{a_i}v^j_{a_j},v^j_{a_j})$.
Finally, the elements $y_b(i,j), y(i,j,1), \ldots, y(i,j,c-1), y(i,j,$ $c+2|E_{ij}|), \ldots, y(i,j,3|E_{ij}|), y_e(i,j)$ and of $Z(i,j)$ are covered  by $S(v^i_{a_i}v^j_{a_j},\vdash)$ and $S(v^i_{a_i}v^j_{a_j},\dashv)$.

Assume now that the set-system $(X,\mathcal S)$ admits a set cover $\mathcal T$ of size $2k^2$.
As mentioned above, this solution $\mathcal T$ should contain exactly one set in each partite set (of the partition of $\mathcal S$).
For each $i \in [k]$, to cover all the elements of $W(i)$, one should take $S(i,a_i,\vdash)$ and $S(i,a'_i,\dashv)$ with $a_i \leqslant a'_i$.
Now, each set of $\mathcal S(i,j)$ has their $A$-intervals containing exactly $t$ elements.
This means that the only way of covering the $tk+2$ elements of $X(i)$ is to take $S(i,a_i,\vdash)$ and $S(i,a'_i,\dashv)$ with $a_i \geqslant a'_i$ (therefore $a_i=a'_i$), and to take all the $k-1$ sets of $\mathcal S(i,j)$ (for $j \in [k] \setminus \{i\}$) of the form $S(v^i_{a_i}v^j_{s_j},v^i_{a_i})$, for some $s_j \in [t]$.
So far, we showed that a potential solution of \ksc should stick to the same vertex $v^i_{a_i}$ in each \emph{color class}.
We now show that if one selects $S(v^i_{a_i}v^j_{s_j},v^i_{a_i})$, one should be consistent with this choice and also selects $S(v^i_{a_i}v^j_{s_j},v^j_{s_j})$.
In particular, it implies that, for each $i \in [k]$, $s_i$ should be equal to $a_i$.
For each $i \neq j \in [k]$, to cover all the elements of $Z(i,j)$, one should take $S(e^{ij}_{c_{ij}},\vdash)$ and $S(e^{ij}_{c'_{ij}},\dashv)$ with $c_{ij} \geqslant c'_{ij}$.
Now, each set of $\mathcal S(i,j)$ and each set of $\mathcal S(j,i)$ has their $B$-intervals containing exactly $|E_{ij}|$ elements.
This means that the only way of covering the $3|E_{ij}|+2$ elements of $Y(i,j)$ is to take $S(e^{ij}_{c_{ij}},\vdash)$ and $S(e^{ij}_{c'_{ij}},\dashv)$ with $c_{ij} \leqslant c'_{ij}$ (therefore, $c_{ij} = c'_{ij}$), and to take the sets $S(v^i_{a_i}v^j_{a_j},v^i_{a_i})$ and $S(v^i_{a_i}v^j_{a_j},v^j_{a_j})$.
Therefore, if there is a solution to the \ksc instance, then there is a multicolored clique $\{v^1_{a_1}, \ldots, v^k_{a_k}\}$ in $G$.

In this reduction, there is a quadratic blow-up of the parameter.
Under the ETH, it would only forbid, by Theorem~\ref{thm:chen06}, an algorithm solving \ksc on \sti instances in time $f(k)(N+M)^{o(\sqrt k)}$.
We can do the previous reduction from \textsc{Multicolored Subgraph Isomorphism} and suppress $X(i,j)$, $X(j,i)$, $Z(i,j)$, and $Y(i,j)$, and the sets defined over these elements, whenever $E_{ij}$ is empty.
One can check that the produced set cover instance is still \sti and that the way of proving correctness does not change.
Therefore, by Theorem~\ref{thm:marx10}, \ksc restricted to \sti instances cannot be solved in time $f(k)(N+M)^{o(k / \log k)}$ for any computable function $f$, unless the ETH fails.
\end{proof}

In the \textsc{2-Track Hitting Set} problem, the input consists of an integer $k$, two totally ordered ground sets $A$ and $B$ of the same cardinality, and two sets $\mathcal S_A$ of $A$-intervals, and $\mathcal S_B$ of $B$-intervals. 
In addition, the elements of $A$ and $B$ are in one-to-one correspondence $\phi: A \rightarrow B$ and each pair $(a,\phi(a))$ is called a \emph{$2$-element}. 
The goal is to find, if possible, a set $S$ of $k$ $2$-elements such that the first projection of $S$ is a hitting set of $\mathcal S_A$, and the second projection of $S$ is a hitting set of $\mathcal S_B$.

\shs is the same problem with color classes over the $2$-elements, and a restriction on the one-to-one mapping $\phi$.
Given two integers $k$ and $t$, $A$ is partitioned into $(C_1,C_2,\ldots,C_k)$ where $C_j=\{a^j_1, a^j_2, \ldots, a^j_t\}$ for each $j \in [k]$.
$A$ is ordered: $a^1_1, a^1_2, \ldots, a^1_t, a^2_1, a^2_2, \ldots, a^2_t,$ $ \ldots, a^k_1, a^k_2, \ldots, a^k_t$.
We define $C'_j:=\phi(C_j)$ and $b^j_i := \phi(a^j_i)$ for all $i \in [t]$ and $j \in [k]$.
We now impose that $\phi$ is such that, for each $j \in [k]$, the set $C'_j$ is a $B$-interval.
That is, $B$ is ordered: $C'_{\sigma(1)}, C'_{\sigma(2)}, \ldots, C'_{\sigma(k)}$ for some permutation on $[k]$, $\sigma \in \mathfrak S_k$.
For each $j \in [k]$, the order of the elements within $C'_j$ can be described by a permutation $\sigma_j \in \mathfrak S_t$ such that the ordering of $C'_j$ is: $b^j_{\sigma_j(1)}, b^j_{\sigma_j(2)}, \ldots, b^j_{\sigma_j(t)}$.
In what follows, it will be convenient to see an instance of \shs as a tuple $\mathcal I=(k \in \mathbb{N},t \in \mathbb{N}, \sigma \in \mathfrak S_k, \sigma_1 \in \mathfrak S_t, \ldots, \sigma_k \in \mathfrak S_t, \mathcal S_A, \mathcal S_B)$, where we recall that $\mathcal S_A$ is a set of $A$-intervals and $\mathcal S_B$ is a set of $B$-intervals.
The size $|\mathcal I|$ of $\mathcal I$ is defined as $kt+|\mathcal S_A|+|\mathcal S_B|$.
We denote by $[a^j_i,a^{j'}_{i'}]$ (resp. $[b^j_i,b^{j'}_{i'}]$) all the elements $a \in A$ (resp. $b \in B$) such that $a^j_i \leq_A a \leq_A a^{j'}_{i'}$ (resp. $b^j_i \leq_B b \leq_B b^{j'}_{i'}$).

Again a solution is a set of $k$ $2$-elements $\{(a_{i(1)}^1, b_{i(1)}^1), \ldots, (a_{i(k)}^k, b_{i(k)}^k)\}$, each from a distinct color class, such that $a_{i(1)}^1, \ldots, a_{i(k)}^k$ is a hitting set of $\mathcal S_A$, and $b_{i(1)}^1, \ldots, b_{i(k)}^k$ is a hitting set of $\mathcal S_B$.

\begin{figure}
\centering
\begin{tikzpicture}

\definecolor{col1}{rgb}{1,0,0}
\definecolor{col2}{rgb}{0,1,0}
\definecolor{col3}{rgb}{0,0,1}
\definecolor{col4}{rgb}{1,1,0}
\def\p{-2}
\def\e{0.5}

\def\t{6}
\def\k{4}
\foreach \j [count = \ja from 0] in {1,...,\k}{
\foreach \i in {1,...,\t}{
\node (a\j\i) at (\e * \i + \e * \ja * \t, 0) {$a^\j_\i$} ;
}
\node[draw, fill=col\j, opacity=0.3, rectangle, rounded corners, thick, fit=(a\j1) (a\j\t),label=above:$C_\j$,inner sep=-0.05cm] (c\j) {} ;
}
\node[rectangle, rounded corners, thick, fit=(c1) (c\k),label=north:$A$,inner sep=0.4cm] (A) {} ;

\node (b34) at (\e, \p) {$b^3_4$} ;
\node (b32) at (2*\e, \p) {$b^3_2$} ;
\node (b33) at (3*\e, \p) {$b^3_3$} ;
\node (b36) at (4*\e, \p) {$b^3_6$} ;
\node (b31) at (5*\e, \p) {$b^3_1$} ;
\node (b35) at (6*\e, \p) {$b^3_5$} ;
\node[draw, fill=col3, opacity=0.3, rectangle, rounded corners, thick, fit=(b34) (b35),label=below:$C'_3$,inner sep=-0.05cm] (cp3) {} ;

\begin{scope}[xshift=6*\e cm]
\node (b12) at (\e, \p) {$b^1_2$} ;
\node (b14) at (2*\e, \p) {$b^1_4$} ;
\node (b11) at (3*\e, \p) {$b^1_1$} ;
\node (b15) at (4*\e, \p) {$b^1_5$} ;
\node (b16) at (5*\e, \p) {$b^1_6$} ;
\node (b13) at (6*\e, \p) {$b^1_3$} ;
\node[draw, fill=col1, opacity=0.3, rectangle, rounded corners, thick, fit=(b12) (b13),label=below:$C'_1$,inner sep=-0.05cm] (cp1) {} ;
\end{scope}

\begin{scope}[xshift=12*\e cm]
\node (b43) at (\e, \p) {$b^4_3$} ;
\node (b46) at (2*\e, \p) {$b^4_6$} ;
\node (b45) at (3*\e, \p) {$b^4_5$} ;
\node (b42) at (4*\e, \p) {$b^4_2$} ;
\node (b41) at (5*\e, \p) {$b^4_1$} ;
\node (b44) at (6*\e, \p) {$b^1_4$} ;
\node[draw, fill=col4, opacity=0.3, rectangle, rounded corners, thick, fit=(b43) (b44),label=below:$C'_4$,inner sep=-0.05cm] (cp4) {} ;
\end{scope}

\begin{scope}[xshift=18*\e cm]
\node (b21) at (\e, \p) {$b^2_1$} ;
\node (b25) at (2*\e, \p) {$b^2_5$} ;
\node (b22) at (3*\e, \p) {$b^2_2$} ;
\node (b24) at (4*\e, \p) {$b^2_4$} ;
\node (b26) at (5*\e, \p) {$b^2_6$} ;
\node (b23) at (6*\e, \p) {$b^2_3$} ;
\node[draw, fill=col2, opacity=0.3, rectangle, rounded corners, thick, fit=(b21) (b23),label=below:$C'_2$,inner sep=-0.05cm] (cp2) {} ;
\end{scope}

\node[rectangle, rounded corners, thick, fit=(cp3) (cp2),label=south:$B$,inner sep=0.4cm] (B) {} ;

\foreach \j in {1,...,\k}{
\draw[very thick,opacity=0.8] (c\j.south) -- (cp\j.north) ; 
}

\foreach \i in {1,...,\t}{
\draw (a1\i.south) -- (b1\i.north) ; 
}

\node (s) at (6.25,-1) {$\sigma$} ;
\node (s1) at (1.5,-1) {$\sigma_1$} ;

\node (oa) at (-0.1,0) {$\leq_A:$} ;
\node (ob) at (-0.1,-2) {$\leq_B:$} ;

\foreach \i/\j/\k in {2.35/4.15/0.6, 2.85/5.15/0.9, 8.35/10.15/0.6, 5.35/6.65/-2.6}{
  \draw[thick] (\i,\k) -- (\j,\k) ;
  \draw[thick] (\i,\k+0.1) --++(0,-0.2) ;
  \draw[thick] (\j,\k+0.1) --++(0,-0.2) ;
}

\end{tikzpicture}
\caption{An illustration of a \shs instance, with $k=4$ and $t=6$.
  The permutation $\sigma \in \mathfrak S_k$ is represented with thick edges.
  Among $\sigma_1 \in \mathfrak S_t$, \dots, $\sigma_k \in \mathfrak S_t$, we only represented $\sigma_1$, for the sake of legibility.
  We also only represented four intervals of the instance, three $A$-intervals, $[a_5^1,a_2^2] = \{a_5^1,a_6^1,a_1^2,a_2^2\}$, $[a_6^1,a_4^2]$, $[a_5^3,a_2^4]$, and one $B$-interval $[b_6^1,b_3^4]=\{b_6^1,b_3^1,b_3^4\}$.}
\label{fig:permutations}
\end{figure}

We show the ETH lower bound and W[1]-hardness for \shs.
The reduction is from \ksc on \sti instances.
We transform the unions of two intervals into 2-elements, and the elements of the \ksc instance into $A$-intervals or $B$-intervals of the \shs instance.

\begin{theorem}\label{cor:main}
  \shs is \wone-hard.
  Furthermore it is not solvable in time $f(k)|\mathcal I|^{o(k / \log k)}$ for any computable function $f$, unless the ETH fails.
\end{theorem}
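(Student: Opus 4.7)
The plan is to derive Theorem~\ref{cor:main} from Lemma~\ref{thm:2-intervals-cover} by a linear-parameter reduction. The key observation is that an instance of \shs can be viewed as the dual of a \sti set cover instance, with the extra twist that the ground set may be ordered in two different ways, one for the $A$-side constraints and one for the $B$-side constraints. Starting from the \sti set cover instance $(X = A_X \cup B_X, \mathcal S)$ produced by Lemma~\ref{thm:2-intervals-cover} in its \textsc{Multicolored Subgraph Isomorphism} variant (so that the number of partite sets is $O(k)$), I would take the \shs ground set to be $\mathcal S$ itself, with $A = B = \mathcal S$ and $\phi$ the identity: each $2$-element corresponds to choosing a single set of the \ksc instance. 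For every $u \in A_X$ I add the constraint $T_u := \{S \in \mathcal S : u \in S\}$ to $\mathcal S_A$, and analogously $T_v := \{S : v \in S\}$ to $\mathcal S_B$ for each $v \in B_X$. A hitting set of size $k'$ for the resulting \shs instance is then equivalent to a cover of $X$ by $k'$ sets from $\mathcal S$.

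The substance of the reduction lies in specifying two orderings of $\mathcal S$ (one for $A$, one for $B$) such that every $T_u$ is a contiguous $A$-interval, every $T_v$ is a contiguous $B$-interval, and every partite set of the \ksc instance appears as a contiguous block in both orderings. The $A$-ordering follows the block structure $X(1), \ldots, X(k), Z(1,2), \ldots, Z(k-1,k)$ of $A_X$: for each block I list the partite sets that contribute to it in sweep order, with internal orderings chosen so that each element's occurrences form a suffix-of-one followed by a prefix-of-the-next (for example, $\mathcal S_b(i), \mathcal S_e(i)$ and $\mathcal S(i,j)$ are internally sorted by $a$ ascending to handle the $x(i,j,c)$ elements, while $\mathcal S(i,j,\vdash)$ and $\mathcal S(i,j,\dashv)$ are sorted by $c$ descending to handle the $z(i,j,m)$ elements). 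The $B$-ordering is built symmetrically from $Y(1,2), \ldots, Y(k-1,k), W(1), \ldots, W(k)$, with possibly different internal orderings (for instance $\mathcal S(i,j,\vdash), \mathcal S(i,j), \mathcal S(j,i), \mathcal S(i,j,\dashv)$ sorted by edge-index $c$ ascending for the $y(i,j,m)$ elements, and $\mathcal S_b(i), \mathcal S_e(i)$ sorted by $a$ descending for the $w(i,c)$ elements). By construction each partite set is a contiguous block in both orderings, so setting the color classes $C_j$ of \shs to be the partite sets makes each $C_j$ an $A$-interval and each $C'_j = \phi(C_j)$ a $B$-interval, as required.

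The parameter of the resulting \shs instance equals the number of partite sets, which is $O(k)$ under the \textsc{MSI} variant of Lemma~\ref{thm:2-intervals-cover}, so both \textbf{W[1]}-hardness and the $f(k)|\mathcal I|^{o(k/\log k)}$ lower bound transfer directly. The main technical hurdle is the case analysis verifying the consecutive-ones property, one case per element type ($x_b(i), x(i,j,c), z(i,j,m), y(i,j,m), w(i,c), x_e(i)$, etc.), including boundary cases (the first and last block of each $X(i)$, the skipped index $j=i$, and the edge cases with $a=1$ or $b=1$ noted in the lemma). In each case one checks that the sets containing the element form a suffix of one partite set followed by a prefix of the adjacent partite set in the chosen ordering, so contiguity follows from the design of the orderings. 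Once this verification is complete, the equivalence of solutions is immediate and Theorem~\ref{cor:main} follows.
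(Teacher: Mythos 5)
Your proposal is correct and follows essentially the same route as the paper: dualize the \sti \ksc instance of Lemma~\ref{thm:2-intervals-cover} by turning each set into a $2$-element and each ground element $u$ into a constraint $T_u$, take the partite sets $\mathcal S_b(i)$, $\mathcal S_e(i)$, $\mathcal S(i,j)$, $\mathcal S(i,j,\vdash)$, $\mathcal S(i,j,\dashv)$ as the color classes, and order $A$ and $B$ block-by-block following the sweep structure of the $X(i)$'s, $Z(i,j)$'s, $Y(i,j)$'s, and $W(i)$'s so that every $T_u$ becomes an interval (the paper phrases your per-block internal orderings uniformly as ``increasing left endpoint, ties broken by increasing right endpoint''). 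The remaining consecutive-ones verification you defer is likewise left as a routine check in the paper.
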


\begin{proof}
This result is a consequence of Lemma~\ref{thm:2-intervals-cover}.
Let $(A \uplus B, \mathcal S)$ be a hard \sti instance of \ksc, obtained from the previous reduction.
We recall that each set $S$ of $\mathcal S$ is the union of an $A$-interval with a $B$-interval: $S=S_A \uplus S_B$.
We transform each set $S$ into a $2$-element $(x_{S,A},x_{S,B})$, and each element $u$ of the \ksc instance into a set $T_u$ of the \shs instance. 
We put element $x_{S,A}$ (resp. $x_{S,B}$) into set $T_u$ whenever $u \in S \cap A=I_A$ (resp. $u \in S \cap B=I_B$). 
We call $A'$ (resp. $B'$) the set of all the elements of the form $x_{S,A}$ (resp. $x_{S,B}$).
We shall now specify an order of $A'$ and $B'$ so that the instance is \emph{structured}.
Keep in mind that elements in the \shs instance corresponds to sets in the \ksc instance.
We order the elements of $A'$ accordingly to the following ordering of the sets of the \ksc instance: $\mathcal S_b(1)$, $\mathcal S(1,2)$, $\ldots$, $\mathcal S(1,k)$, $\mathcal S_e(1)$, $\mathcal S_b(2)$, $\mathcal S(2,1)$, $\ldots$, $\mathcal S(2,k)$, $\mathcal S_e(2)$, $\ldots$, $\mathcal S_b(k)$, $\mathcal S(k,1)$, $\ldots$, $\mathcal S(k,k-1)$, $\mathcal S_e(k)$, $\mathcal S(1,2,\vdash)$, $\mathcal S(1,2,\dashv)$, $\mathcal S(1,3,\vdash)$, $\mathcal S(1,3,\dashv)$, $\ldots$, $\mathcal S(k-1,k,\vdash)$, $\mathcal S(k-1,k,\dashv)$. 
We order the elements of $B'$ accordingly to the following ordering of the sets of the \ksc instance: $\mathcal S(1,2,\vdash)$, $\mathcal S(1,2)$, $\mathcal S(2,1)$, $\mathcal S(1,2,\dashv)$, $\mathcal S(1,3,\vdash)$, $\mathcal S(1,3)$, $\mathcal S(3,1)$, $\mathcal S(1,3,\dashv)$, $\ldots$, $\mathcal S(k-1,k,\vdash)$, $\mathcal S(k-1,k)$, $\mathcal S(k,k-1)$, $\mathcal S(k-1,k,\dashv)$, $\mathcal S_b(1)$, $\mathcal S_e(1)$, $\ldots$, $\mathcal S_b(k)$, $\mathcal S_e(k)$. 
Within all those sets of sets, we order by increasing left endpoint (and then, in case of a tie, by increasing right endpoint).
One can now check that with those two orders $\leq_{A'}$ and $\leq_{B'}$, all the sets $T_u$'s are $A'$-interval or $B'$-interval.
Also, one can check that the \textsc{$2$-Track Hitting Set} instance is \emph{structured} by taking as color classes the partite sets $\mathcal S_b(i)$'s, $\mathcal S_e(i)$'s, $\mathcal S(i,j)$'s, $\mathcal S(i,j,\vdash)$'s, and $\mathcal S(i,j,\dashv)$'s.
Now, taking one $2$-element in each color class to hit all the sets $T_u$ corresponds to taking one set in each partite set of $\mathcal S$ to dominate all the elements of the \ksc instance. 
\end{proof}


$2$-track (unit) interval graphs are the intersection graphs of (unit) $2$-track intervals, where a (unit) $2$-track interval is the union of a (unit) interval in each of two parallel lines, called the first track and the second track.
A (unit) $2$-track interval may be referred to as an \emph{object}.
Two $2$-track intervals intersect if they intersect in either the first or the second track.
We observe here that many dominating problems with some geometric flavor can be restated with the terminology of $2$-track (unit) interval graphs. 

In particular, a result very close to Theorem~\ref{cor:main} was obtained recently:
\begin{theorem}[\cite{Marx15}]\label{cor:2-track-obj-int}
Given the representation of a $2$-track unit interval graph, the problem of selecting $k$ objects to dominate all the \emph{intervals} is \wone-hard, and not solvable in time $f(k)n^{o(k / \log k)}$ for any computable function $f$, unless the ETH fails.
\end{theorem}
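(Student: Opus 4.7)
The plan is to reduce from \ksc on \sti instances, whose hardness is Lemma~\ref{thm:2-intervals-cover}, via a natural duality that swaps sets and elements. Given a \sti instance $(A \uplus B, \mathcal S)$ in which each $S$ decomposes as $S = S_A \uplus S_B$ (an $A$-interval unioned with a $B$-interval), I would turn each set $S \in \mathcal S$ into a $2$-element $(x_{S,A}, x_{S,B})$, and each ground element $u$ into a set $T_u$, placing $x_{S,A}$ into $T_u$ iff $u \in S_A$ and $x_{S,B}$ into $T_u$ iff $u \in S_B$. Under this correspondence, selecting $k$ sets of $\mathcal S$ that cover every $u \in A \uplus B$ is equivalent to selecting $k$ $2$-elements whose first (resp. second) projection hits every $T_u$ with $u \in A$ (resp. $u \in B$), so the parameter is preserved exactly and the size blow-up is polynomial.

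The main obstacle is to equip the new ground sets $A' = \{x_{S,A} : S \in \mathcal S\}$ and $B' = \{x_{S,B} : S \in \mathcal S\}$ with total orders, and to choose color classes over the $2$-elements, in such a way that (i) every $T_u$ is an $A'$-interval or a $B'$-interval, and (ii) the image $\phi(C_j)$ of every color class is a $B'$-interval, i.e., the instance is \emph{structured}. This cannot be done purely abstractly from any \sti instance; instead, I would exploit the concrete structure of the instance produced in Lemma~\ref{thm:2-intervals-cover}, where $\mathcal S$ is naturally partitioned into the $2k^2$ partite sets $\mathcal S_b(i), \mathcal S_e(i), \mathcal S(i,j), \mathcal S(i,j,\vdash), \mathcal S(i,j,\dashv)$, and any size-$2k^2$ cover must contain exactly one set per partite set. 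These partite sets become the color classes of the \shs instance (padding with dummy elements to a uniform size $t$ if needed).

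Concretely, I would order $A'$ by listing its partite sets in the order $\mathcal S_b(1), \mathcal S(1,2), \ldots, \mathcal S(1,k), \mathcal S_e(1), \mathcal S_b(2), \ldots, \mathcal S_e(k)$, followed by the pairs $\mathcal S(i,j,\vdash), \mathcal S(i,j,\dashv)$; and order $B'$ dually, listing $\mathcal S(1,2,\vdash), \mathcal S(1,2), \mathcal S(2,1), \mathcal S(1,2,\dashv), \ldots, \mathcal S(k-1,k,\dashv)$, followed by the $\mathcal S_b(i), \mathcal S_e(i)$. Within each block, I would break ties by increasing left endpoint of the underlying $A$- or $B$-interval, and then by increasing right endpoint. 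With these orderings, the sets of $\mathcal S$ containing a fixed $u \in A$ lie in at most a couple of consecutive partite sets (those indexed by the $X$- or $Z$-block containing $u$) and, by the tie-breaking, form a contiguous interval of $A'$; symmetrically for $u \in B$ and $B'$. This gives property (i). Property (ii) follows directly: each partite set is consecutive in both $A'$ and $B'$, so $\phi$ sends each color class to a $B'$-interval, and the outer permutation $\sigma$ is induced by the chosen sequence of partite sets.

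Finally, the color-class constraint of \shs matches the structural fact that every $2k^2$-size set cover of the \sti instance picks exactly one set per partite set, so set covers and color-respecting hitting sets are in bijection. The hardness transfer is then immediate from Lemma~\ref{thm:2-intervals-cover}, yielding both \wone-hardness and the $f(k)|\mathcal I|^{o(k/\log k)}$ lower bound under the ETH.
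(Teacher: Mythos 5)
First, a point of bookkeeping: the paper does not prove Theorem~\ref{cor:2-track-obj-int} at all --- it is imported from Marx and Pilipczuk~\cite{Marx15}. What the paper proves is the neighbouring Theorem~\ref{cor:main} (hardness of \shs), which it describes as ``very close'' to the cited result, and your write-up is essentially a verbatim reconstruction of that proof: the same dualization of the \sti \ksc instance from Lemma~\ref{thm:2-intervals-cover} (sets become $2$-elements $(x_{S,A},x_{S,B})$, elements become sets $T_u$), the same block orderings of $A'$ and $B'$ by partite sets with the same left-endpoint/right-endpoint tie-breaking, and the same choice of the $2k^2$ partite sets as colour classes. As a proof of Theorem~\ref{cor:main}, this is correct and matches the paper exactly.

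As a proof of the statement you were actually given, however, there is a gap you do not address: Theorem~\ref{cor:2-track-obj-int} is phrased in terms of a \emph{$2$-track unit interval graph} in which one selects $k$ \emph{objects} (unit $2$-track intervals) so that every interval of the representation is intersected by a selected object, whereas your construction produces a $2$-track \emph{hitting set} instance in which one selects pairs of \emph{points} to stab a separate family of intervals. Bridging the two requires realizing the $2$-elements as unit $2$-track intervals and the sets $T_u$ as intervals of further objects of the same graph, which forces you to (i) pair the $A'$-intervals with the $B'$-intervals into objects, (ii) choose coordinates so that unit-length intervals reproduce exactly the intended stabbing pattern, and (iii) ensure that the intervals of the selectable objects themselves are dominated by any valid selection. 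None of this appears in your argument, and it is precisely the part that the paper delegates to~\cite{Marx15}; what you have proved instead is the paper's own structured substitute, Theorem~\ref{cor:main}.
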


We still had to give an \emph{alternative} proof of this result because we will need the additional property that the instance can be further assumed to have the structure depicted in Figure~\ref{fig:permutations}.
This will be crucial for showing the hardness result for \vgag. 

Other results on dominating problems in $2$-track unit interval graphs include: 
\begin{theorem}[\cite{Jiang10}]
Given the representation of a $2$-track unit interval graph, the problem of selecting $k$ objects to dominate all the objects is \wone-hard.
\end{theorem}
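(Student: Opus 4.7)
The plan is to reduce directly from \mkc (Theorem~\ref{thm:chen06}). Given an instance $G=(V_1 \uplus \cdots \uplus V_k, E)$ with $|V_i|=t$ for each $i$, we will build a family of $2$-track unit intervals and ask for a dominating set of size $k'=k+\binom{k}{2}$ in the associated $2$-track unit interval graph. The two tracks serve complementary purposes: track~$1$ will carry a \emph{vertex-selection} gadget for each color class $V_i$, while track~$2$ will carry an \emph{edge-verification} gadget for each pair $\{i,j\}$ with $E_{ij}\neq \emptyset$.

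For the vertex-selection of $V_i$, I would reserve a horizontal region $R_i$ on track~$1$, disjoint from the other $R_{i'}$'s, containing $t$ ``candidate'' objects (one per vertex of $V_i$) and a cluster of ``anchor'' objects whose track-$2$ intervals sit in a position private to $V_i$. This privacy ensures that anchors can only be dominated through track~$1$, i.e.\ by a candidate of $V_i$; choosing the positions and multiplicities of anchors appropriately forces any solution of the prescribed size to pick exactly one candidate per color class, thereby encoding a choice $v^i_{a_i}\in V_i$. Symmetrically, for each pair $\{i,j\}$, I would reserve a region on track~$2$ holding $|E_{ij}|$ ``edge-witness'' objects together with anchors that force exactly one witness per pair. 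The crucial coupling between tracks is engineered by aligning the witness for an edge $uv\in E_{ij}$ so that its track-$1$ interval coincides with the candidate of $u$ in $R_i$ and its track-$2$ interval coincides with the candidate of $v$ in $R_j$; the witness is then dominated through track~$1$ (resp.~track~$2$) exactly when $u$ (resp.~$v$) is the candidate picked in its color class.

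Correctness would follow the now-standard template: a multicolored clique $\{v^1_{a_1},\ldots,v^k_{a_k}\}$ yields a dominating set made of the $k$ corresponding candidates and the $\binom{k}{2}$ edge-witnesses of the clique edges, and one checks that every anchor and every other object is dominated on at least one track; conversely, any dominating set of size $k'$ must spend exactly one object per color region and per edge region, and the cross-track coincidences of the edge-witnesses propagate the vertex choices made on track~$1$ to consistent edge choices on track~$2$, producing a multicolored clique in $G$. The main obstacle will be the unit-length restriction on both tracks: unlike in general interval graphs one cannot lengthen or shorten intervals to tune adjacency locally, so both the ``exactly one candidate'' property and the intricate cross-track coincidences have to be realized purely through precise numerical placement of many small anchor objects, with sufficiently wide gaps separating distinct gadgets to rule out spurious intersections on either track. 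Verifying that all required alignments can be realized simultaneously with genuinely unit-length intervals on both tracks is the delicate engineering step of the construction.
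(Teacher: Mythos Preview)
The paper does not prove this statement at all: it is merely quoted from \cite{Jiang10} as related work, alongside the analogous results of \cite{Marx15} and \cite{Dom12}. There is therefore no ``paper's own proof'' to compare against; the authors give their own argument only for the \shs problem (Lemma~\ref{thm:2-intervals-cover} and Theorem~\ref{cor:main}), which concerns a \emph{hitting set} formulation rather than the object-dominating-object variant of Jiang.

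As for your sketch itself, the overall template (reduce from \mkc, one selection gadget per color class, one edge gadget per pair, budget $k+\binom{k}{2}$) is the right shape for Jiang's result, but the description of the cross-track coupling is internally inconsistent: you place the candidate gadgets $R_i$ on track~$1$ and the edge gadgets on track~$2$, yet then say an edge-witness for $uv\in E_{ij}$ has its track-$2$ interval ``coincide with the candidate of $v$ in $R_j$'' --- but $R_j$ lives on track~$1$. In Jiang's actual construction the coupling is symmetric: each vertex object carries information on \emph{both} tracks, and an edge object for $uv$ meets $u$'s object on one track and $v$'s object on the other. You would need to straighten out which intervals sit where before the soundness argument can be made precise. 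You also correctly flag, but do not resolve, the main technical point: realizing all required adjacencies and non-adjacencies with \emph{unit} intervals on both tracks simultaneously. That is the heart of Jiang's proof, and without it the proposal remains a plan rather than a proof.
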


\begin{theorem}[\cite{Dom12}]
Given the representation of a $2$-track unit interval graph, the problem of selecting $k$ \emph{intervals} to dominate all the objects is \wone-hard.
\end{theorem}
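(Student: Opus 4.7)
The plan is to reduce from \shs (Theorem~\ref{cor:main}), whose own two-track structure aligns naturally with the target representation. Given an \shs instance $\mathcal I=(k,t,\sigma,\sigma_1,\ldots,\sigma_k,\mathcal S_A,\mathcal S_B)$, I would lay out widely-spaced positions for the elements of $A$ along track~1 (in the order $\leq_A$) and for the elements of $B$ along track~2 (in the order $\leq_B$). For each $2$-element $(a,\phi(a))$ I introduce a \emph{representative} $2$-track object consisting of a unit interval around position $a$ on track~1 and a unit interval around position $\phi(a)$ on track~2; these are the only intervals a sensible solution would actually select. For each set $S_A\in\mathcal S_A$ I add a $2$-track object whose track-1 part is a unit interval covering precisely the representatives of the elements of $S_A$, and whose track-2 part is parked in a \emph{sterile} region containing no representative; such an object is then dominated iff some selected interval on track~1 is a representative of an element of $S_A$. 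I proceed symmetrically for $\mathcal S_B$.

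To complete the reduction I would add two further families of $2$-track objects. First, for each color class $C_j$ a pair of \emph{bracket} objects sitting just before the first and just after the last representative of $C_j$, so that any solution is forced to pick at least one representative interval per color class on each track. Second, \emph{linker} objects strung along the bijection $\phi$; exploiting the block-consecutive structure of $\phi$ guaranteed by Theorem~\ref{cor:main} (the permutations $\sigma$ and $\sigma_j$), I can arrange that the linkers collectively force the track-1 and track-2 choices within each color class to come from the \emph{same} $2$-element. Setting the budget to $2k$, an \shs solution of value $k$ yields a dominating set of $2k$ intervals (two per selected $2$-element), and conversely any dominating set of $2k$ intervals must, by the brackets and linkers, consist of $k$ coherent $2$-element pairs whose projections hit $\mathcal S_A$ and $\mathcal S_B$. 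Since the blow-up of the parameter is linear and the total number of intervals is polynomial in $|\mathcal I|$, the bound of Theorem~\ref{cor:main} transfers directly.

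The main obstacle is respecting the \emph{unit-length} restriction everywhere. A generic set $S_A\in\mathcal S_A$ may span an arbitrarily long range of the $A$-order, yet its corresponding object must have a unit-length track-1 part covering exactly the representatives in $S_A$. I would handle this by a global rescaling, multiplying all positions by a suitable constant so that every set-span fits inside a single unit; where this is still too coarse, a long set can be replaced by a chain of overlapping short ``pieces'' glued through a common sterile companion on the other track. A second delicate point is preventing the solution from cheating by selecting a constituent interval of a set, bracket, or linker object rather than of a representative; careful positioning of the sterile parts and of the neighborhoods of the representatives should make any such choice strictly weaker, so that no optimal solution can gain from it. Once these geometric bookkeeping issues are resolved, the \wone-hardness follows from Theorem~\ref{cor:main}.
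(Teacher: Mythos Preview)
The paper does not prove this theorem. It is stated purely as a citation of a known result of Dom, Fellows, Rosamond, and Sikdar \cite{Dom12}, with the remark that in their paper it is phrased as stabbing axis-parallel rectangles with axis-parallel lines. There is therefore no proof in the paper to compare your attempt against.

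As for the proposal on its own terms: reducing from \shs is a natural idea, but the sketch has a genuine gap precisely where you flag the difficulty, namely the \emph{unit}-length constraint. If the representatives on a track are spaced far enough apart that distinct $2$-elements get disjoint unit intervals, then a set $S_A\in\mathcal S_A$ that spans many consecutive elements of $A$ cannot be encoded by a single unit interval on track~1 meeting exactly the right representatives. Your ``global rescaling'' fix collapses all representatives into a single unit and destroys the separation you rely on elsewhere. Your ``chaining'' fix replaces one set-object by several piece-objects; but then \emph{every} piece must be dominated, which forces a selected representative in every subrange of $S_A$ rather than merely one element of $S_A$, so the correspondence with hitting $S_A$ breaks. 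Finally, the ``linker'' objects that are supposed to force the track-1 and track-2 choices within a color class to come from the same $2$-element are left entirely unspecified; this is the crux of any such reduction and cannot be deferred to bookkeeping. In short, you have identified a plausible source problem, but the construction as outlined does not go through.
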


The result of Dom et al. is formalized differently in their paper \cite{Dom12}, where the problem is defined as stabbing axis-parallel rectangles with axis-parallel lines.

\section{Parameterized hardness of the point guard variant}\label{sec:point-guard}

As exposed in Section~\ref{subsec:proof-idea}, we give a reduction from the \shs problem.
The main challenge is to design a \emph{linker} gadget that groups together specific pairs of points in the polygon.
The following introductory lemma inspires the \emph{linker} gadgets for both \pgag and \vgag.
\begin{lemma}\label{lem:linking-set-system:pg&vg}
The only minimum hitting sets of the set-system $\mathcal S = \{S_i=\{1,2,\ldots,i,$ $\overline{i+1},\overline{i+2},$ $\ldots,\overline{n}\}$ $|$ $i \in [n]\} \cup \{\overline{S}_i=\{\overline{1},\overline{2},\ldots,\overline{i},i+1,i+2,\ldots,n\}$ $|$ $i \in [n]\}$ are $\{i,\overline{i}\}$, for each $i \in [n]$.
\end{lemma}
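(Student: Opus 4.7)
The plan is to prove the lemma in three short steps: verify that each $\{i,\overline i\}$ is a hitting set of cardinality two, then rule out size one, and finally rule out all other pairs.

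\textbf{Step 1: Each $\{i,\overline i\}$ is a hitting set.} I would observe that $i \in S_j$ iff $i \leqslant j$ while $\overline i \in S_j$ iff $i > j$, so one of these two conditions always holds and thus $\{i, \overline i\}$ meets every $S_j$. The argument for $\overline S_j$ is completely symmetric: $i \in \overline S_j$ iff $i > j$ and $\overline i \in \overline S_j$ iff $i \leqslant j$.

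\textbf{Step 2: No hitting set of size one exists.} I would exhibit two disjoint sets of $\mathcal S$: for example $S_1 = \{1, \overline 2, \overline 3, \ldots, \overline n\}$ and $\overline S_1 = \{\overline 1, 2, 3, \ldots, n\}$. No single element hits both.

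\textbf{Step 3: A size-two hitting set is necessarily of the form $\{i,\overline i\}$.} Let $H = \{x,y\}$ be any hitting set of size two. First I would rule out having both elements of the same ``sign'': if both are positive, then $\overline S_n = \{\overline 1, \ldots, \overline n\}$ has no positive element and cannot be hit; dually, if both are negative, then $S_n = \{1, \ldots, n\}$ cannot be hit. So write $H = \{i,\overline j\}$ with $i$ positive and $\overline j$ negative. Assume toward contradiction $i \neq j$; by symmetry suppose $i < j$ (the case $i > j$ is handled identically using $S_{i-1}$ instead). Consider $\overline S_{j-1} = \{\overline 1, \ldots, \overline{j-1}, j, j+1, \ldots, n\}$, which lies in $\mathcal S$ since $j - 1 \geqslant 1$. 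Then $i \notin \overline S_{j-1}$ because $i < j$, and $\overline j \notin \overline S_{j-1}$ by inspection of the indices, so $\overline S_{j-1}$ is unhit, a contradiction.

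The argument is essentially bookkeeping on the index ranges, so I do not expect any real obstacle beyond keeping the edge cases straight (in particular that the set $\overline S_{j-1}$ or $S_{i-1}$ invoked in Step 3 is guaranteed to belong to $\mathcal S$, which is why one splits on whether $i < j$ or $i > j$).
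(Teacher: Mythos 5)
Your proof is correct and follows essentially the same route as the paper's: check that $\{i,\overline i\}$ hits everything, use $S_n$ and $\overline S_n$ to force one element of each sign, and exhibit an unhit set when the two indices differ (you pick $\overline S_{j-1}$ and $S_{i-1}$ as witnesses where the paper picks $\overline S_i$ and $S_j$, which is an immaterial difference). No gaps.
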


\begin{proof}
First, for each $i \in [n]$, one may easily observe that $\{i,\overline{i}\}$ is a hitting set of $\mathcal S$.
Now, because of the sets $S_n$ and $\overline{S}_n$ one should pick one element $i$ and one element $\overline{j}$ for some $i, j \in [n]$.
If $i < j$, then set $\overline{S}_i$ is not hit, and if $i > j$, then $S_j$ is not hit.
Therefore, $i$ should be equal to $j$.
\end{proof}

Henceforth we keep this bar notation to denote pairs of homologous objects (points, vertices) that we wish to link together.  

\ParaPoint*

\begin{proof}
Given an instance $\mathcal I=(k \in \mathbb{N},t \in \mathbb{N}, \sigma \in \mathfrak S_k, \sigma_1 \in \mathfrak S_t, \ldots, \sigma_k \in \mathfrak S_t, \mathcal S_A, \mathcal S_B)$ of \shs, we build a simple polygon $\mathcal P$ with $O(kt+|\mathcal S_A|+|\mathcal S_B|)$ vertices, such that $\mathcal I$ is a YES-instance iff $\mathcal P$ can be guarded by $3k$ points.

\textbf{Outline.} We recall that $A$'s order is: $a^1_1, \ldots, a^1_t, \ldots, a^k_1, \ldots, a^k_t$ and $B$'s order is determined by $\sigma$ and the $\sigma_j$'s (see Figure~\ref{fig:permutations}).
The global strategy of the reduction is to \emph{allocate}, for each color class $j \in [k]$, $2t$ special points in the polygon $\alpha^j_1, \ldots, \alpha^j_t$ and $\beta^j_1, \ldots, \beta^j_t$.
Placing a guard in $\alpha^j_i$ (resp.~$\beta^j_i$) shall correspond to picking a $2$-element whose first (resp.~second) component is $a^j_i$ (resp.~$b^j_i$). 
The points $\alpha^j_i$'s and $\beta^j_i$'s ordered by increasing $y$-coordinates will match the order of the $a^j_i$'s along the order $\leq_A$ and then of the $b^j_i$'s along $\leq_B$.
Then, far in the horizontal direction, we will place pockets to encode each $A$-interval of $\mathcal S_A$, and each $B$-interval of $\mathcal S_B$ (see Figure~\ref{fig:interval-pg&vg}).

The critical issue will be to \emph{link} point $\alpha^j_i$ to point $\beta^j_i$.
Indeed, in the \shs problem, one selects $2$-elements (one per color class), so we should prevent one from placing two guards in $\alpha^j_i$ and $\beta^j_{i'}$ with $i \neq i'$.
The so-called \emph{point linker} gadget will be grounded in Lemma~\ref{lem:linking-set-system:pg&vg}.
Due to a technicality, we will need to introduce a \emph{copy} $\overline{\alpha}^j_i$ of each $\alpha^j_i$.
In each part of the gallery encoding a color class $j \in [k]$, the only way of guarding all the pockets with only three guards will be to place them in $\alpha^j_i$, $\overline{\alpha}^j_i$, and $\beta^j_i$ for some $i \in [t]$ (see Figure~\ref{fig:linker-pg}).
Hence, $3k$ guards will be necessary and sufficient to guard the whole $\mathcal P$ iff there is a solution to the instance of \shs.

\begin{figure}
\centering
\resizebox{300pt}{!}
{
\begin{tikzpicture}[scale=1]
\def\pockets{4}
\def\r{7}
\path[name path = p] (\r,1) -- (\r,\pockets+0.5) ;
\foreach \i in {1,...,\pockets}{
\node [vertex,label=right:$z_\i$] (z\i) at (8.5,\i) {} ;
}

\node[point,label=left:$p_1$] (p1) at (-0.3,1.1) {};
\node[point,label=left:$p_2$] (p2) at (0.2,1.6) {};
\node[point,label=left:$p_3$] (p3) at (-0.5,2) {};
\node[point,label=left:$p_4$] (p4) at (0.1,2.5) {};
\node[point,label=left:$p_5$] (p5) at (0,3.2) {};
\node[point,label=left:$p_6$] (p6) at (0.7,3.8) {};

\begin{scope}[thin,opacity=0.5]
\draw[name path global/.expanded=zpa1] (z1) -- (p1) ;
\draw[name path global/.expanded=zpb1] (z1) -- (p3) ;

\draw[name path global/.expanded=zpa2] (z2) -- (p2) ;
\draw[name path global/.expanded=zpb2] (z2) -- (p5) ;

\draw[name path global/.expanded=zpa3] (z3) -- (p4) ;
\draw[name path global/.expanded=zpb3] (z3) -- (p5) ;

\draw[name path global/.expanded=zpa4] (z4) -- (p4) ;
\draw[name path global/.expanded=zpb4] (z4) -- (p6) ;
\end{scope}

\foreach \i in {1,...,\pockets}{
\draw [name intersections={of=p and zpa\i}, very thick] (z\i) -- (intersection-1) coordinate (cza\i) {} ;
\draw [name intersections={of=p and zpb\i}, very thick] (z\i) -- (intersection-1) coordinate (czb\i) {} ; 
}

\foreach \i [count=\j from 1] in {2,...,\pockets}{
\draw[very thick] (cza\i) -- (czb\j) ;
}
\end{tikzpicture}
}
\caption{Interval gadgets encoding $\{p_1,p_2,p_3\}$, $\{p_2,p_3,p_4,p_5\}$, $\{p_4,p_5\}$, and $\{p_4,p_5,p_6\}$.}
\label{fig:interval-pg&vg}
\end{figure}

We now get into the details of the reduction.
We will introduce several characteristic lengths and compare them; when \emph{$l_1 \ll l_2$} means that $l_1$ should be thought as really small compared to $l_2$, and \emph{$l_1 \approx l_2$} means that $l_1$ and $l_2$ are roughly of the same order.
The motivation is to guide the intuition of the reader without bothering her/him too much about the details.
At the end of the construction, we will specify more concretely how those lengths are chosen. 

\textbf{Construction.}
We start by formalizing the positions of the $\alpha^j_i$'s and $\beta^j_i$'s.
We recall that we want the points $\alpha^j_i$'s and $\beta^j_i$'s ordered by increasing $y$-coordinates, to match the order of the $a^j_i$'s and $b^j_i$'s along $\leq_A$ and $\leq_B$, with first all the elements of $A$ and then all the elements of $B$.
Starting from some $y$-coordinate $y_1$ (which is the one given to point $\alpha^1_1$), the $y$-coordinates of the $\alpha^j_i$'s are regularly spaced out by an offset $y$; that is, the $y$-coordinate of $\alpha^j_i$ is $y_1+(i+(j-1)t)y$.
Between the $y$-coordinate of the last element in $A$ (i.e., $a^k_t$ whose $y$-coordinate is $y_1+(kt-1)y$) and the first element in $B$, there is a large offset $L$, such that the $y$-coordinate of $\beta^j_i$ is $y_1+(kt-1)y+L+(\text{ind}(b^j_i)-1)y$ (for any $j \in [k]$ and $i \in [t]$) where $\text{ind}(b^j_i)$ is the \emph{index} of $b^j_i$ along the order $\leq_B$, that is the number of $b \in B$ such that $b \leq_B b^j_i$.

For each color class $j \in [k]$, let $x_j := x_1 + (j-1)D$ for some $x$-coordinate $x_1$ and value $D$, and $y_j := y_1 + (j-1)ty$.
The allocated points $\alpha^j_1, \alpha^j_2, \alpha^j_3, \ldots, \alpha^j_t$ are on a line at coordinates: $(x_j,y_j), (x_j+x,y_j+y), (x_j+2x,y_j+2y), \ldots, (x_j+(t-1)x,y_j+(t-1)y)$, for some value $x$.
We place, to the left of those points, a rectangular pocket $\mathcal P_{j,r}$ of width, say, $y$ and length, say\footnote{the exact width and length of this pocket are not relevant; the reader may just think of $\mathcal P_{j,r}$ as a thin pocket which forces to place a guard on a thin strip whose uppermost boundary is $\ell(\alpha^j_1,\alpha^j_t)$}, $tx$ such that the uppermost longer side of the rectangular pocket lies on the line $\ell(\alpha^j_1,\alpha^j_t)$ (see Figure~\ref{fig:weak-linker-pg}).
The $y$-coordinates of $\beta^j_1, \beta^j_2, \beta^j_3, \ldots, \beta^j_t$ have already been defined.
We set, for each $i \in [t]$, the $x$-coordinate of $\beta^j_i$ to $x_j+(i-1)x$, so that $\beta^j_i$ and $\alpha^j_i$ share the same $x$-coordinate. 
One can check that it is consistent with the previous paragraph.
We also observe that, by the choice of the $y$-coordinate for the $\beta^j_i$'s, we have both encoded the permutations $\sigma_j$'s and permutation $\sigma$ (see Figure~\ref{fig:overall-pg} or Figure~\ref{fig:weak-linker-pg}).

Our construction almost exclusively rely on so-called \emph{triangular pockets}.
Henceforth, for a vertex $v$ and two points $p$ and $p'$, we call \emph{a triangular pocket rooted at vertex $v$ and supported by $\ray(v,p)$ and $\ray(v,p')$} a sub-polygon $w, v, w'$ (a triangle) such that $\ray(v,w)$ passes through $p$, $\ray(v,w')$ passes through $p'$, while $w$ and $w'$ are close to $v$ (sufficiently close not to interfere with the rest of the construction).
We say that $v$ is the \emph{root} of the triangular pocket, that we often denote by $\mathcal P(v)$.
We also say that the pocket $\mathcal P(v)$ \emph{points} towards $p$ and $p'$.

We now encode the $A$-intervals and $B$-intervals with triangular pockets.
At the $x$-coordinate $x_k+(t-1)x+F$, for some large value $F$, we put between $y$-coordinates $y_1$ and $y_k+(kt-1)y$, for each $A$-interval $I_q=[a^j_i,a^{j'}_{i'}] \in \mathcal S_A$ we put one triangular pocket $\mathcal P(z_{A,q})$ rooted at vertex $z_{A,q}$ and supported by $\ray(z_{A,q},\alpha^j_i)$ and $\ray(z_{A,q},\alpha^{j'}_{i'})$.
Intuitively, if $y \ll x \ll D \ll F$, the only $\alpha^{j''}_{i''}$ seeing vertex $z_{A,q}$ should be all the points such that $a^j_i \leq_A a^{j''}_{i''} \leq_A a^{j'}_{i'}$ (see Figure~\ref{fig:overall-pg} and Figure~\ref{fig:interval-pg&vg}).
We place those $|\mathcal S_A|$ pockets along the $y$-axis, and space them out by distance $s$.
To guarantee that we have enough room to place all those pockets, $s \ll y$ shall later hold.
Similarly, we place at the same $x$-coordinate $x_k+(t-1)x+F$ each of the $|\mathcal S_B|$ triangular pockets $\mathcal P(z_{B,q})$ rooted at vertex $z_{B,q}$ and supported by $\ray(z_{B,q},\beta^j_i)$ and $\ray(z_{B,q},\beta^{j'}_{i'})$ for $B$-interval $[b^j_i,b^{j'}_{i'}] \in \mathcal S_B$; and we space out those pockets by distance $s$ along the $y$-axis between $x$-coordinates $y_1+(kt-1)y+L$ and $y_1+2(kt-1)y+L$.
We do not specify an order to the $z_{A,q}$'s (resp.~the $z_{B,q}$'s) along the $y$-axis since we do not need that to prove the reduction correct.
The different values ($s$, $x$, $y$, $D$, $L$, and $F$) introduced so far compare in the following way: $s \ll y \ll x \ll D \ll F$, and $x \ll L \ll F$ (see Figure~\ref{fig:overall-pg}).

We now describe the \emph{linker gadget}, or how to force consistent pairs of guards $\alpha^j_i$ and its \asso $\beta^j_i$.
The idea is that pairs of guards $\alpha^j_i, \beta^j_i$ will be very effective since the two points see disjoint sets of pockets, whereas pairs $\alpha^j_i, \beta^j_{i'}$ (with $i \neq i'$) will overlap on some pockets, and miss some other pockets completely.

For each $j \in [k]$, let us mentally draw $\ray(\alpha^j_t,\beta^j_1)$ and consider points slightly to the left of this ray at a distance, say, $L'$ from point $\alpha^j_t$.
Let us call $\mathcal R^j_{\text{left}}$ that informal region of points.
Any point in $\mathcal R^j_{\text{left}}$ sees, from right to left, in this order $\alpha^j_1$, $\alpha^j_2$ up to $\alpha^j_t$, and then, $\beta^j_1$, $\beta^j_2$ up to $\beta^j_t$.
This observation relies on the fact that $y \ll x \ll L$.
So, from the distance, the points $\beta^j_1, \ldots, \beta^j_t$ look almost \emph{flat}.
It makes the following construction possible.
In $\mathcal R^j_{\text{left}}$, for each $i \in [t-1]$, we place a triangular pocket $\mathcal P(c^j_i)$ rooted at vertex $c^j_i$ and supported by $\ray(c^j_i,\alpha^j_{i+1})$ and $\ray(c^j_i,\beta^j_i)$. 
We place also a triangular pocket $\mathcal P(c^j_t)$ rooted at $c^j_t$ supported by $\ray(c^j_t,\beta^j_1)$ and $\ray(c^j_t,\beta^j_t)$.
We place the vertices $c^j_i$ ($i \in [t]$) at the same $y$-coordinate and we space them out by distance $x$ along the $x$-axis (see Figure~\ref{fig:weak-linker-pg}).
Similarly, let us informally refer to the region slightly to the right of $\ray(\alpha^j_1,\beta^j_t)$ at a distance $L'$ from point $\alpha^j_1$, as $\mathcal R^j_{\text{right}}$.
Any point $\mathcal R^j_{\text{right}}$ sees, from right to left, in this order $\beta^j_1$, $\beta^j_2$ up to $\beta^j_t$, and then, $\alpha^j_1$, $\alpha^j_2$ up to $\alpha^j_t$.
Therefore, one can place in $\mathcal R^j_{\text{left}}$, for each $i \in [t-1]$, a triangular pocket $\mathcal P(d^j_i)$ rooted at $d^j_i$ supported by $\ray(d^j_i,\beta^j_{i+1})$ and $\ray(c^j_i,\alpha^j_i)$.
We place also a triangular pocket $\mathcal P(d^j_t)$ rooted at $d^j_t$ supported by $\ray(d^j_t,\alpha^j_1)$ and $\ray(d^j_t,\alpha^j_t)$.
Again, those $t$ pockets can be put at the same $y$-coordinate and spaced out horizontally by $x$ (see Figure~\ref{fig:weak-linker-pg}).
We denote by $\mathcal P_{j,\alpha,\beta}$ the set of pockets $\{\mathcal P(c^j_1), \ldots, \mathcal P(c^j_t), \mathcal P(d^j_1), \ldots, \mathcal P(d^j_t)\}$ and informally call it the \emph{weak point linker} (or simply, \emph{weak linker}) of $\alpha^j_1, \ldots, \alpha^j_t$ and $\beta^j_1, \ldots, \beta^j_t$.
We may call the pockets of $\mathcal R^j_{\text{left}}$ (resp. $\mathcal R^j_{\text{right}}$) \emph{left} pockets (resp. \emph{right} pockets).

\begin{figure}
\centering
\resizebox{350pt}{!}
{
\begin{tikzpicture}[scale=0.35]
\foreach \i in {1,...,6}{
\node[point,label=below:$\alpha_\i$] (a\i) at (\i,0.1 * \i) {} ;
}

\node[point,label=above:$\beta_1$] (b1) at (1,8.4) {} ;
\node[point,label=above:$\beta_2$] (b2) at (2,8.7) {} ;
\node[point,label=above:$\beta_3$] (b3) at (3,8.1) {} ;
\node[point,label=above:$\beta_4$] (b4) at (4,8.2) {} ;
\node[point,label=above:$\beta_5$] (b5) at (5,8.6) {} ;
\node[point,label=above:$\beta_6$] (b6) at (6,8.3) {} ;

\foreach \i in {1,...,6}{
\coordinate (cb\i) at (b\i) ;
}

\path [name path = p] (-16,21.5) -- (23,21.5) ;
\foreach \i in {1,...,6}{
\node [vertex,label=above:$d_\i$] (c\i) at (17+\i,23) {} ;
\draw [name path global/.expanded=e\i,very thin] (c\i) -- (a\i) ;
\node [vertex,label=above:$c_\i$] (d\i) at (-17+\i,23) {} ;
\draw [name path global/.expanded=g\i,very thin] (d\i) -- ($(b\i)!-0.7cm!(d\i)$) coordinate (q\i) ;
}

\foreach \i [count=\j from 2] in {1,...,5}{
\draw[name path global/.expanded=f\i,very thin] (c\i) -- ($(b\j)!-0.7cm!(c\i)$) coordinate (p\i) ;       
}

\draw[name path = f6, very thin] (c6) -- (a1) ;  


\draw[name path = h6, very thin] (d6) -- ($(b1)!-0.7cm!(d6)$) coordinate (p6) ;

\foreach \i [count=\j from 2] in {1,...,5}{
\draw[name path global/.expanded=h\i, very thin] (d\i) -- (a\j) ;       
}

\foreach \i in {1,...,6}{
\draw [name intersections={of=p and e\i}, very thick] (c\i) -- (intersection-1) coordinate (s\i) {} ; 
\draw [name intersections={of=p and f\i}, very thick] (c\i) -- (intersection-1) coordinate (t\i) {} ; 
\draw [name intersections={of=p and g\i}, very thick] (d\i) -- (intersection-1) coordinate (u\i) {} ; 
\draw [name intersections={of=p and h\i}, very thick] (d\i) -- (intersection-1) coordinate (v\i) {} ; 
}

\foreach \i [count=\j from 2] in {1,...,5}{
\draw[very thick] (s\i) -- (t\j) ; 
\draw[very thick] (u\i) -- (v\j) ;       
}

\draw[very thick] (u6) -- (t1) ;

\begin{scope}[opacity=0.5]
\foreach \i [count = \j from 2] in {1,...,5}{
\fill (cb\j) -- (p\i) -- (q\j) ;
}
\fill (cb1) -- (p6) -- (q1) ;
\end{scope}

\begin{scope}[xshift=-9cm,yshift=-0.9cm]
\draw[very thick] (6,0.5) -- (1,0) -- (1,0.2) -- (6,0.7)  ;
\end{scope}

\end{tikzpicture}
}
\caption{Weak point linker gadget $\mathcal P_{j,\alpha,\beta}$ with $t=6$. We omit the superscript $j$ in all the labels.}
\label{fig:weak-linker-pg}
\end{figure}

As we will show later, if one wants to guard with only two points all the pockets of $\mathcal P_{j,\alpha,\beta} = \{\mathcal P(c^j_1), \ldots, \mathcal P(c^j_t), \mathcal P(d^j_1), \ldots, \mathcal P(d^j_t)\}$ and one first decides to put a guard on point $\alpha^j_i$ (for some $i \in [t]$), then one is not forced to put the other guard on point $\beta^j_i$ but only on an area whose uppermost point is $\beta^j_i$ (see the shaded areas below the $b^j_i$'s in Figure~\ref{fig:weak-linker-pg}).
Now, if $\beta^j_1, \ldots, \beta^j_t$ would all lie on a same line $\ell$, we could shrink the shaded area of each $\beta^j_i$ (Figure~\ref{fig:weak-linker-pg}) down to the single point $\beta^j_i$ by adding a thin rectangular pocket on $\ell$ (similarly to what we have for $\alpha^j_1, \ldots, \alpha^j_t$).
Naturally we need that $\beta^j_1, \ldots, \beta^j_t$ are \emph{not} on the same line, in order to encode $\sigma_j$.

The remedy we suggest is to make a triangle of weak linkers.
For each $j \in [k]$, we allocate $t$ points $\overline{\alpha}^j_1, \overline{\alpha}^j_2, \ldots, \overline{\alpha}^j_t$ on a horizontal line, spaced out by distance $x$, say, $\approx \frac{D}{2}$ to the right and $\approx L$ to the up of $\beta^j_t$.
We put a thin horizontal rectangular pocket $\mathcal P_{j,\overline{r}}$ of the same dimension as $\mathcal P_{j,r}$ such that the lowermost longer side of $\mathcal P_{j,\overline{r}}$ is on the line $\ell(\overline{\alpha}^j_1,\overline{\alpha}^j_t)$.
We add the $2t$ pockets corresponding to a weak linker $\mathcal P_{j,\alpha,\overline{\alpha}}$ between $\alpha^j_1, \ldots, \alpha^j_t$ and $\overline{\alpha}^j_1, \ldots, \overline{\alpha}^j_t$ as well as the $2t$ pockets of a weak linker $\mathcal P_{j,\overline{\alpha},\beta}$ between $\overline{\alpha}^j_1, \ldots, \overline{\alpha}^j_t$ and $\beta^j_1, \ldots, \beta^j_t$ as pictured in Figure~\ref{fig:linker-pg}.
We denote by $\mathcal P_j$ the union $\mathcal P_{j,r} \cup \mathcal P_{j,\overline{r}} \cup \mathcal P_{j,\alpha,\beta} \cup \mathcal P_{j,\alpha,\overline{\alpha}} \cup \mathcal P_{j,\overline{\alpha},\beta}$ of all the pockets involved in the encoding of color class $j$.
Now, say, one wants to guard all the pockets of $\mathcal P_j$ with only three points, and chooses to put a guard on $\alpha^j_i$ (for some $i \in [t]$).
Because of the pockets of $\mathcal P_{j,\alpha,\overline{\alpha}} \cup P_{j,\overline{r}}$, one is forced to place a second guard precisely on $\overline{\alpha}^j_i$.
Now, because of the weak linker $\mathcal P_{j,\alpha,\beta}$ the third guard should be on a region whose uppermost point is $\beta^j_i$, while, because of $\mathcal P_{j,\overline{\alpha},\beta}$ the third guard should be on a region whose lowermost point is $\beta^j_i$.
The conclusion is that the third guard should be put precisely on $\beta^j_i$. 
This \emph{triangle} of weak linkers is called the \emph{linker} of color class $j$. 
The $k$ linkers are placed accordingly to Figure~\ref{fig:overall-pg}.
This ends the construction.

\begin{figure}[h!]
\centering
\resizebox{450pt}{!}
{
\begin{tikzpicture}[scale=0.15]
\foreach \i in {1,...,6}{
\node[spoint] (a\i) at (\i,0.1 * \i-12) {} ;
}

\begin{scope}[xshift=35cm,yshift=30cm]
\foreach \i in {1,...,6}{
\node[spoint] (zb\i) at (\i-6,-3) {} ;
\coordinate (zc\i) at (11+1.5 * \i,10) {} ;
\coordinate (zd\i) at (-3+\i,10) {} ;
}
\end{scope}

\foreach \i in {1,...,6}{
\coordinate (yc\i) at (-35,-3 - 0.7 * \i) {} ;
\coordinate (yd\i) at (-35,10 -0.7 * \i) {} ;
}

\path [name path = zp] (0,38) -- (70,38) ;
\path [name path = yp]  (-33,-10) -- (-33,20) ;

\begin{scope}[yshift=8cm]
\node[spoint] (b1) at (1,8.4) {} ;
\node[spoint] (b2) at (2,8.7) {} ;
\node[spoint] (b3) at (3,8.1) {} ;
\node[spoint] (b4) at (4,8.2) {} ;
\node[spoint] (b5) at (5,8.6) {} ;
\node[spoint] (b6) at (6,8.3) {} ;
\end{scope}


\path [name path = p] (-16,28.5) -- (23,28.5) ;

\begin{scope}[yshift=7cm]
\foreach \i in {1,...,6}{
\coordinate (c\i) at (17+\i,23) {} ;
\draw [name path global/.expanded=e\i,green,very thin] (c\i) -- (a\i) ;
\coordinate (d\i) at (-17+\i,23) {} ;
\draw [name path global/.expanded=g\i,green,very thin] (d\i) -- ($(b\i)!-0.7cm!(d\i)$) coordinate (q\i) ;
}
\end{scope}

\foreach \i [count=\j from 2] in {1,...,5}{
\draw[name path global/.expanded=f\i,green,very thin] (c\i) -- ($(b\j)!-0.7cm!(c\i)$) coordinate (p\i) ;       
}

\draw[name path = f6, green,very thin] (c6) -- (a1) ;  

\draw[name path = h6, green,very thin] (d6) -- ($(b1)!-0.7cm!(d6)$) coordinate (p6) ;

\foreach \i [count=\j from 2] in {1,...,5}{
\draw[name path global/.expanded=h\i, green,very thin] (d\i) -- (a\j) ;       
}

\foreach \i in {1,...,6}{
\draw [name intersections={of=p and e\i}, very thick] (c\i) -- (intersection-1) coordinate (s\i) {} ; 
\draw [name intersections={of=p and f\i}, very thick] (c\i) -- (intersection-1) coordinate (t\i) {} ; 
\draw [name intersections={of=p and g\i}, very thick] (d\i) -- (intersection-1) coordinate (u\i) {} ; 
\draw [name intersections={of=p and h\i}, very thick] (d\i) -- (intersection-1) coordinate (v\i) {} ; 
}

\foreach \i [count=\j from 2] in {1,...,5}{
\draw[very thick] (s\i) -- (t\j) ; 
\draw[very thick] (u\i) -- (v\j) ;       
}

\draw[very thick] (u6) -- (t1) ;


\foreach \i in {1,...,6}{
\draw [name path global/.expanded=ze\i,blue,very thin] (zc\i) -- (a\i) ;
\draw [name path global/.expanded=zg\i,blue,very thin] (zd\i) -- (zb\i) coordinate (zq\i) ;
}

\foreach \i [count=\j from 2] in {1,...,5}{
\draw[name path global/.expanded=zf\i,blue,very thin] (zc\i) -- (zb\j) coordinate (zp\i) ;       
}

\draw[name path = zf6, blue,very thin] (zc6) -- (a1) ;  

\draw[name path = zh6, blue,very thin] (zd6) -- (zb1) coordinate (zp6) ;

\foreach \i [count=\j from 2] in {1,...,5}{
\draw[name path global/.expanded=zh\i, blue,very thin] (zd\i) -- (a\j) ;       
}

\foreach \i in {1,...,6}{
\draw [name intersections={of=zp and ze\i}, very thick] (zc\i) -- (intersection-1) coordinate (zs\i) {} ; 
\draw [name intersections={of=zp and zf\i}, very thick] (zc\i) -- (intersection-1) coordinate (zt\i) {} ; 
\draw [name intersections={of=zp and zg\i}, very thick] (zd\i) -- (intersection-1) coordinate (zu\i) {} ; 
\draw [name intersections={of=zp and zh\i}, very thick] (zd\i) -- (intersection-1) coordinate (zv\i) {} ; 
}

\foreach \i [count=\j from 2] in {1,...,5}{
\draw[very thick] (zs\i) -- (zt\j) ; 
\draw[very thick] (zu\i) -- (zv\j) ;       
}

\draw[very thick] (zu6) -- (zt1) ;

\draw[very thick] (zv1) -- (s6) ;


\foreach \i in {1,...,6}{
\draw [name path global/.expanded=ye\i,red, very thin] (yc\i) -- (zb\i) ;
\draw [name path global/.expanded=yg\i,red, very thin] (yd\i) -- ($(b\i)!-0.7cm!(yd\i)$) coordinate (yq\i) ;
}

\foreach \i [count=\j from 2] in {1,...,5}{
\draw[name path global/.expanded=yf\i,red, very thin] (yc\i) -- ($(b\j)!-0.7cm!(yc\i)$) coordinate (yp\i) ;       
}

\draw[name path = yf6, red, very thin] (yc6) -- (zb1) ;  

\draw[name path = yh6, red, very thin] (yd6) -- ($(b1)!-0.7cm!(yd6)$) coordinate (yp6) ;

\foreach \i [count=\j from 2] in {1,...,5}{
\draw[name path global/.expanded=yh\i, red, very thin] (yd\i) -- (zb\j) ;       
}

\foreach \i in {1,...,6}{
\draw [name intersections={of=yp and ye\i}, very thick] (yc\i) -- (intersection-1) coordinate (ys\i) {} ; 
\draw [name intersections={of=yp and yf\i}, very thick] (yc\i) -- (intersection-1) coordinate (yt\i) {} ; 
\draw [name intersections={of=yp and yg\i}, very thick] (yd\i) -- (intersection-1) coordinate (yu\i) {} ; 
\draw [name intersections={of=yp and yh\i}, very thick] (yd\i) -- (intersection-1) coordinate (yv\i) {} ; 
}

\foreach \i [count=\j from 2] in {1,...,5}{
\draw[very thick] (ys\i) -- (yt\j) ; 
\draw[very thick] (yu\i) -- (yv\j) ;       
}

\draw[very thick] (yu6) -- (yt1) ;
\begin{scope}[yshift=2cm]
\draw[very thick] (yv1) -- (-20,25) -- (-26,25) -- (-26,25.5) -- (-20,25.5) -- (v1) ;
\end{scope}

\begin{scope}[xshift=-20cm,yshift=-13.5cm]
\draw[very thick] (6,0.4) -- (1,-0.1) -- (1,0.3) -- (6,0.8) -- (ys6) ;
\end{scope}

\node at (2,26) {$\mathcal P_{j,\alpha,\beta}$} ;
\node at (42,40) {$\mathcal P_{j,\alpha,\overline{\alpha}}$} ;
\node at (-29,3) {$\mathcal P_{j,\overline{\alpha},\beta}$} ;

\end{tikzpicture}
}
\caption{Point linker gadget $\mathcal P_j$: a triangle of (three) weak point linkers $\mathcal P_{j,\alpha,\beta}$, $\mathcal P_{j,\alpha,\overline{\alpha}}$, $\mathcal P_{j,\overline{\alpha},\beta}$, and two rectangular pockets forcing one guard on the lines $\ell(\alpha_1^j,\alpha_2^j)=\ell(\alpha_1^j,\alpha_t^j)$ and $\ell(\overline{\alpha}_1^j,\overline{\alpha}_2^j)=\ell(\overline{\alpha}_1^j,\overline{\alpha}_t^j)$.}
\label{fig:linker-pg}
\end{figure}

\textbf{Specification of the distances.}
We can specify the coordinates of positions of all the vertices by fractions of integers. 
These integers are polynomially bounded in $n$. If we want to get integer coordinates, we can transform the rational coordinates to integer coordinates by multiplying all of them with the least common multiple of all the denominators, which is not polynomially bounded anymore. The length of the integers in binary is still polynomially bounded.

We can safely set $s$ to one, as it is the smallest length, we specified.
We will put $|\mathcal{S}_A|$ pockets on track $1$ and $|\mathcal{S}_B|$ pockets on track $2$. 
It is sufficient to have an opening space of one between them.
Thus, the space on the right side of $\mathcal{P}$, for all pockets of track $1$ is bounded by $2\cdot |\mathcal{S}_A|$.
Thus setting $y$ to $|\mathcal{S}_A| + |\mathcal{S}_B|$ secures us that we have plenty of space to place all the pockets.
We specify $F = (|\mathcal{S}_A|+|\mathcal{S}_B|) D k = y\cdot D \cdot k$.
We have to show that this is large enough to guarantee that the pockets on track $1$ distinguish the picked points only by the $y$-coordinate.
Let $p$ and $q$ be two points among the $\alpha_{i}^j$. Their vertical distance is upper bounded by $D k$ and their horizontal distance is lower bounded by $y$. Thus the slope of $\ell = \ell(p,q)$ is at least $\tfrac{y}{Dk}$. At the right side of $\mathcal{P}$ the line $\ell$ will be at least $F \tfrac{y}{Dk}$ above the pockets of track $1$. 
Note $F \tfrac{y}{Dk} = y D k \cdot \tfrac{y}{Dk}  > y^2 > |\mathcal{S}_A|^2 > 2\cdot |\mathcal{S}_A|$.
The same argument shows that $F$ is sufficiently large for track $2$. 

The remaining lengths $x, L, L'$, and $D$ can be specified in a similar fashion.
For the construction of the pockets, let $s\in\mathcal{S}_A$ be an $A$-interval with endpoints $a$ and $b$, represented by some points $p$ and $q$ and assume the opening vertices $v$ and $w$ of the triangular pocket are already specified.
Then the two lines $\ell(p,v)$ and $\ell(q,w)$ will meet at some point $x$ to the right of $v$ and $w$. 
By Lemma~\ref{lem:Rational}, $x$ has rational coordinates and the integers to represent them can be expressed by the coordinates of $p, q, v,$ and $w$.
This way, all the pockets can be explicitly constructed using rational coordinates as claimed above.
\begin{figure}
	\includegraphics[page = 2]{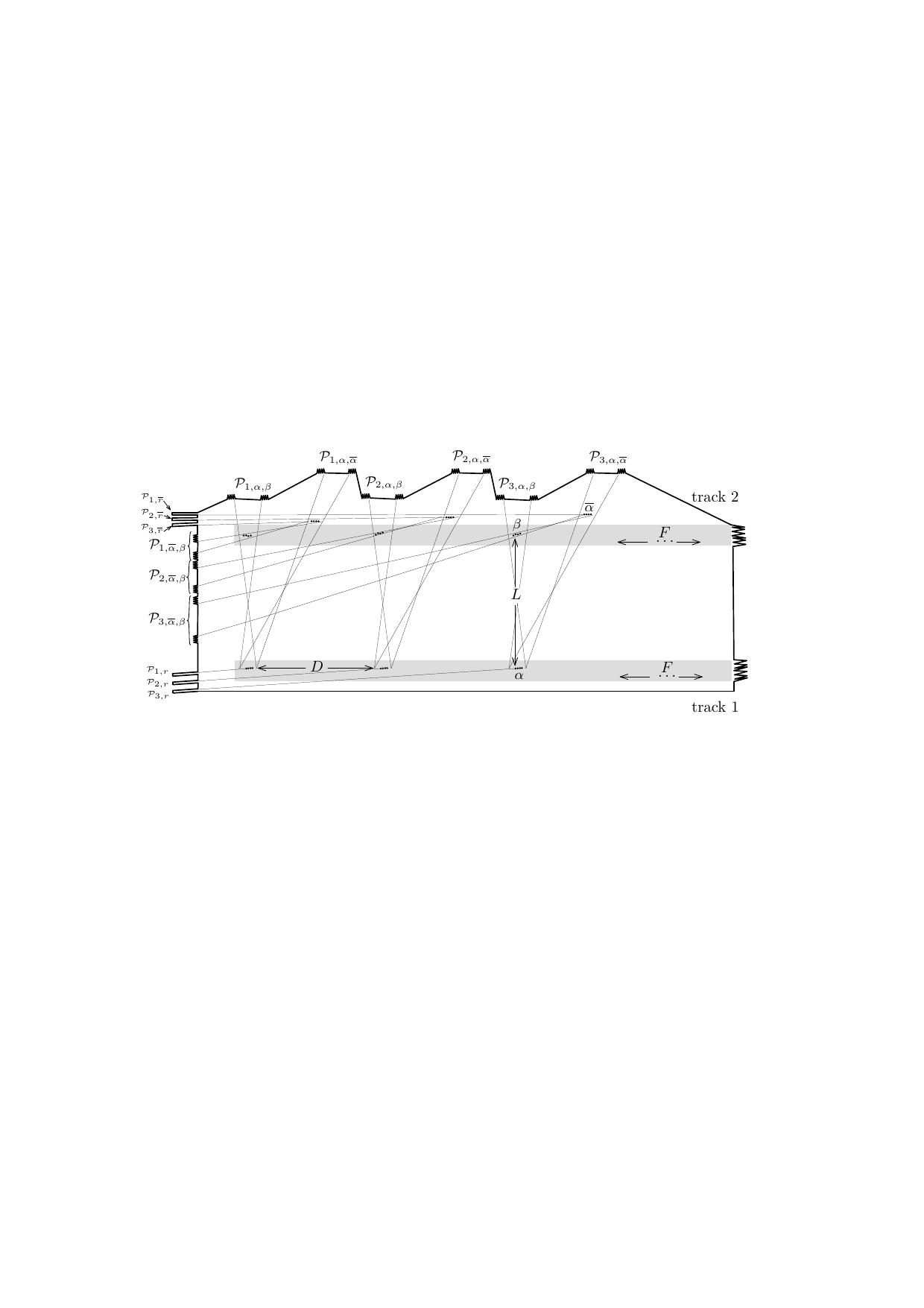}
	\caption{The overall picture of the reduction with $k=3$.
		The combination of $\mathcal P_{j,\alpha,\beta}$, $\mathcal P_{j,\alpha,\overline{\alpha}}$, $\mathcal P_{j,\overline{\alpha},\beta}$, $P_{j,r}$, and $P_{j,\overline{r}}$ forces to place pairs of guards at $\alpha_{i(j)}^j, \beta_{i(j)}^j$, analogously to the \shs semantics.
		The $y$-coordinates of these points encode the total orders over $A$ and $B$.
		The $A$-intervals are encoded by triangular pockets in track 1, while the $B$-intervals are encoded in track 2.}
	\label{fig:overall-pg}
\end{figure}

\textbf{Correctness.}
We now show that the reduction is correct.
The following lemma is the main argument for the easier implication: \emph{if $\mathcal I$ is a YES-instance, then the gallery that we build can be guarded with $3k$ points}. 

\begin{lemma}\label{positive:pg}
$\forall j \in [k]$, $\forall i \in [t]$, the three \asso points $\alpha^j_i$, $\overline{\alpha}^j_i$, $\beta^j_i$ guard $\mathcal P_j$ entirely.
\end{lemma}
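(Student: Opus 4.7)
The plan is to decompose $\mathcal{P}_j=\mathcal{P}_{j,r}\cup\mathcal{P}_{j,\overline{r}}\cup\mathcal{P}_{j,\alpha,\beta}\cup\mathcal{P}_{j,\alpha,\overline{\alpha}}\cup\mathcal{P}_{j,\overline{\alpha},\beta}$ and verify, for each of the five constituent pieces, that a suitable sub-multiset of $\{\alpha^j_i,\overline{\alpha}^j_i,\beta^j_i\}$ guards it entirely. For the thin rectangular pocket $\mathcal{P}_{j,r}$ the argument is immediate: its uppermost long side lies on the line $\ell(\alpha^j_1,\alpha^j_t)$, on which $\alpha^j_i$ itself sits, so the line of sight from $\alpha^j_i$ sweeps through the opening of the pocket without obstruction and covers the whole convex rectangle; the symmetric argument handles $\mathcal{P}_{j,\overline{r}}$ via $\overline{\alpha}^j_i$.

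For the three weak linkers I would prove one uniform statement: for a weak linker built between two sequences $P=(p_1,\dots,p_t)$ and $Q=(q_1,\dots,q_t)$, the pair $\{p_i,q_i\}$ covers all $2t$ of its triangular pockets. The pivotal geometric fact is that a triangular pocket rooted at $r$ and supported by $\ray(r,u)$ and $\ray(r,v)$ is entirely visible from any polygon point lying in the closed cone at $r$ bounded by those two rays, and in particular from $u$ and from $v$: such a point sees $r$ through the pocket opening, and the pocket itself is the convex hull of $r$ with the opening edge. Inspecting the definitions of the pockets given in the construction, a short case distinction gives that $p_i$ sees every $\mathcal{P}(c_\ell)$ with $\ell<i$, every $\mathcal{P}(d_\ell)$ with $\ell\geqslant i$, and $\mathcal{P}(d_t)$; while $q_i$ sees every $\mathcal{P}(c_\ell)$ with $\ell\geqslant i$, $\mathcal{P}(c_t)$, and every $\mathcal{P}(d_\ell)$ with $\ell<i$. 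The two ranges partition $[t-1]$ on each side, so $\{p_i,q_i\}$ covers the whole weak linker.

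Applying this uniform statement with $(P,Q)\in\{(\alpha^j,\beta^j),(\alpha^j,\overline{\alpha}^j),(\overline{\alpha}^j,\beta^j)\}$ in turn shows that $\{\alpha^j_i,\overline{\alpha}^j_i,\beta^j_i\}$ guards $\mathcal{P}_{j,\alpha,\beta}\cup\mathcal{P}_{j,\alpha,\overline{\alpha}}\cup\mathcal{P}_{j,\overline{\alpha},\beta}$, which together with the two rectangular pockets yields the lemma. The only point demanding genuine care is to justify that, from each $c^j_\ell\in\mathcal{R}^j_{\text{left}}$ (and symmetrically from each $d^j_\ell\in\mathcal{R}^j_{\text{right}}$), the points $\alpha^j_1,\dots,\alpha^j_t,\beta^j_1,\dots,\beta^j_t$ really appear in the angular order declared in the construction, so that the cone bounding each pocket indeed contains the claimed subset of $\alpha$'s and $\beta$'s and no other pocket obstructs the segment between the guard and the root. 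This is the main (mild) obstacle, and it is a direct consequence of the parameter hierarchy $s\ll y\ll x\ll L\ll F$, which makes each $\beta$-block look essentially flat when viewed from $\mathcal{R}^j_{\text{left}}$ or $\mathcal{R}^j_{\text{right}}$ and prevents any spurious reordering or interference.
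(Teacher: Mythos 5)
Your proposal is correct and follows essentially the same route as the paper: the same decomposition of $\mathcal P_j$ into the two rectangular pockets and the three weak linkers, and the same index split (the guard on the $P$-side sees $\mathcal P(c_\ell)$ for $\ell<i$ and $\mathcal P(d_\ell)$ for $\ell\geqslant i$, the guard on the $Q$-side sees the complementary pockets). Your explicit appeal to the cone/angular-order property of the triangular pockets and to the hierarchy $s\ll y\ll x\ll L\ll F$ only makes precise what the paper's proof leaves implicit.
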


{
\begin{proof}
The rectangular pockets $\mathcal P_{j,r}$ and $\mathcal P_{j,\overline{r}}$ are entirely seen by $\alpha^j_i$ and $\overline{\alpha}^j_i$, respectively.
The pockets $\mathcal P(c^j_1), \mathcal P(c^j_2), \ldots \mathcal P(c^j_{i-1})$ and $\mathcal P(d^j_i), \mathcal P(d^j_{i+1}), \ldots \mathcal P(d^j_t)$ are all entirely seen by $\alpha^j_i$, while the pockets $\mathcal P(c^j_i), \mathcal P(c^j_{i+1}), \ldots \mathcal P(c^j_t)$ and $\mathcal P(d^j_1), \mathcal P(d^j_2), \ldots \mathcal P(d^j_{i-1})$ are all entirely seen by $\beta^j_i$.
This means that $\alpha^j_i$ and $\beta^j_i$ jointly see all the pockets of $\mathcal P_{j,\alpha,\beta}$.
Similarly, $\alpha^j_i$ and $\overline{\alpha}^j_i$ jointly see all the pockets of $\mathcal P_{j,\alpha,\overline{\alpha}}$, and $\overline{\alpha}^j_i$ and $\beta^j_i$ jointly see all the pockets of $\mathcal P_{j,\overline{\alpha},\beta}$.
Therefore, $\alpha^j_i$, $\overline{\alpha}^j_i$, $\beta^j_i$ jointly see all the pockets of $\mathcal P_j$.
\end{proof}
}

Assume that $\mathcal I$ is a YES-instance and let $\{(a^1_{s_1},b^1_{s_1}), \ldots, (a^k_{s_k},b^k_{s_k})\}$ be a solution.
We claim that $G=\{\alpha^1_{s_1},\overline{\alpha}^1_{s_1},\beta^1_{s_1}, \ldots, \alpha^k_{s_k},\overline{\alpha}^k_{s_k},\beta^k_{s_k}\}$ guard the whole polygon $\mathcal P$.
By Lemma~\ref{positive:pg}, $\forall j \in [k]$, $\mathcal P_j$ is guarded.
For each $A$-interval (resp.~$B$-interval) in $\mathcal S_A$ (resp.~$\mathcal S_B$) there is at least one $2$-element $(a^j_{s_j},b^j_{s_j})$ such that $a^j_{s_j} \in \mathcal S_A$ (resp. $b^j_{s_j} \in \mathcal S_B$).
Thus, the corresponding pocket is guarded by $\alpha^j_{s_j}$ (resp.~$\beta^j_{s_j}$).
The rest of the polygon $\mathcal P$ (which is not part of pockets) is guarded by, for instance, $\{\overline{\alpha}^1_{s_1}, \ldots, \overline{\alpha}^k_{s_k}\}$.
So, $G$ is indeed a solution and it contains $3k$ points.

We now assume that there is a set $G$ of $3k$ points guarding $\mathcal P$.
We will then show that $\mathcal I$ is a YES-instance.
We observe that no point of $\mathcal P$ sees inside two triangular pockets one being in $\mathcal P_{j,\alpha,\gamma}$ and the other in $\mathcal P_{j',\alpha,\gamma'}$ with $j \neq j'$ and $\gamma, \gamma' \in \{\beta,\overline{\alpha}\}$.
Further, $V(r(\mathcal P_{j,\alpha,\beta} \cup \mathcal P_{j,\alpha,\overline{\alpha}})) \cap V(r(\mathcal P_{j',\alpha,\beta} \cup \mathcal P_{j',\alpha,\overline{\alpha}})) = \emptyset$ when $j \neq j'$, where $r$ maps a set of triangular pockets to the set of their root.
Also, for each $j \in [k]$, seeing $\mathcal P_{j,\alpha,\beta}$ and $\mathcal P_{j,\alpha,\overline{\alpha}}$ entirely requires at least $3$ points.
This means that for each $j \in [k]$, one should place three guards in $V(r(\mathcal P_{j,\alpha,\beta} \cup \mathcal P_{j,\alpha,\overline{\alpha}}))$.
Furthermore, one can observe that, among those three points, one should guard a triangular pocket $\mathcal P_{j',r}$ and another should guard $\mathcal P_{j'',\overline{r}}$.
Thus a set $S_1$, consisting of three guards of $G$, sees $\mathcal P_1$ and two rectangular pockets $\mathcal P_{j',r}$ and $\mathcal P_{j'',\overline{r}}$.

Let us call $\ell_1$ (resp. $\ell'_1$) the line corresponding to the extension of the uppermost (resp. lowermost) longer side of $\mathcal P_{1,r}$ (resp. $\mathcal P_{1,\overline{r}}$).
The only points of $\mathcal P$ that can see a rectangular pocket $\mathcal P_{j',r}$ and at least $t$ pockets of $\mathcal P_{1,\alpha,\overline{\alpha}}$ are on $\ell_1$: more specifically, they are the points $\alpha^1_1, \ldots, \alpha^1_t$.
 The only points that can see a rectangular pocket $\mathcal P_{j'',\overline{r}}$ and at least $t$ pockets of $\mathcal P_{1,\alpha,\overline{\alpha}}$ are on $\ell'_1$: they are the points $\overline{\alpha}^1_1, \ldots, \overline{\alpha}^1_t$.
As $\mathcal P_{1,\alpha,\overline{\alpha}}$ has $2t$ pockets, $S_1$ should contain two points $\alpha^1_i$ and $\overline{\alpha}^1_{i'}$.
By the argument of Lemma~\ref{lem:linking-set-system:pg&vg}, $i$ should be equal to $i'$ (otherwise, $i<i'$ and the left pocket pointing towards $\overline{\alpha}^1_{i'-1}$ and $\alpha^1_{i'}$ is not seen, or $i>i'$ and the right pocket pointing towards $\alpha^1_{i+1}$ and $\overline{\alpha}^1_i$ is not seen).
We denote by $s_1$ this shared value.
Now, to see the left pocket $\mathcal P(c^1_{s_1})$ and the right pocket $\mathcal P(d^1_{s_1-1})$ (that should still be seen), the third guard should be to the left of $\ell(c^1_{s_1},\beta^1_{s_1})$ and to the right of $\ell(d^1_{s_1-1},\beta^1_{s_1})$ (see shaded area of Figure~\ref{fig:weak-linker-pg}).
That is, the third guard of $S_1$ should be on a region in which $\beta^1_{s_1}$ is the uppermost point.
The same argument with the pockets of $\mathcal P_{1,\overline{\alpha},\beta}$ implies that the third guard should also be on a region in which $\beta^1_{s_1}$ is the lowermost point.
Thus, the third guard of $S_1$ has to be the point~$\beta^1_{s_1}$.
Therefore $S_1 = \{\alpha^1_{s_1}, \overline{\alpha}^1_{s_1}, \beta^1_{s_1}\}$, for some $s_1 \in [t]$.

As none of those three points see any pocket $\mathcal P_{j,\overline{\alpha},\beta}$ with $j>1$ (we already mentioned that no pocket of $\mathcal P_{j,\alpha,\beta}$ and $\mathcal P_{j,\alpha,\overline{\alpha}}$ with $j>1$ can be seen by those points), we can repeat the argument for the second color class; and so forth up to color class $k$. 
Thus, $G$ is of the form $\{\alpha^1_{s_1},\overline{\alpha}^1_{s_1},\beta^1_{s_1}, \ldots, \alpha^k_{s_k},\overline{\alpha}^k_{s_k},\beta^k_{s_k}\}$.
As $G$ also guards all the pockets of tracks 1 and 2, the set of $k$ 2-elements $\{(a^1_{s_1},b^1_{s_1}), \ldots, (a^k_{s_k},b^k_{s_k})\}$ hits all the $A$-intervals of $\mathcal S_A$, and the $B$-intervals of $\mathcal S_B$.
\end{proof}


\section{Parameterized hardness of the vertex guard variant}\label{sec:vertex-guard}

We now turn to the vertex guard variant and show the same hardness result.
Again, we reduce from \shs and our main task is to design a \emph{linker gadget}.
Though, \emph{linking} pairs of vertices turns out to be very different from \emph{linking} pairs of points.
Therefore, we have to come up with fresh ideas to carry out the reduction.
In a nutshell, the principal ingredient is to \emph{link} pairs of convex vertices by introducing reflex vertices at strategic places.
As placing guards on those reflex vertices is not supposed to happen in the \shs instance, we design a so-called \emph{filter gadget} to prevent any solution from doing so.   

 \ParaVertex*

\begin{proof}
From an instance $\mathcal I=(k \in \mathbb{N},t \in \mathbb{N}, \sigma \in \mathfrak S_k, \sigma_1 \in \mathfrak S_t, \ldots, \sigma_k \in \mathfrak S_t, \mathcal S_A, \mathcal S_B)$, we build a simple polygon $\mathcal P$ with $O(kt+|\mathcal S_A|+|\mathcal S_B|)$ vertices, such that $\mathcal I$ is a YES-instance iff $\mathcal P$ can be guarded by $3k$ vertices.

\textbf{Linker gadget.}
This gadget encodes the 2-elements.
We build a sub-polygon that can be seen entirely by pairs of convex vertices if and only if they correspond to the same 2-element. 

For each $j \in [k]$, permutation $\sigma_j$ will be encoded by a sub-polygon $\mathcal P_j$ that we call \emph{vertex linker}, or simply \emph{linker} (see Figure~\ref{fig:linker-vg}).
We regularly set $t$ consecutive vertices $\alpha^j_1, \alpha^j_2, \ldots, \alpha^j_t$ in this order, along the $x$-axis.
Opposite to this \emph{segment}, we place $t$ vertices $\beta^j_{\sigma_j(1)}, \beta^j_{\sigma_j(2)}, \ldots, \beta^j_{\sigma_j(t)}$ in this order, along the $x$-axis, too.
The $\beta^j_{\sigma_j(1)}, \ldots, \beta^j_{\sigma_j(t)}$, contrary to $\alpha^j_1, \ldots, \alpha^j_t$, are \emph{not} consecutive; we will later add some reflex vertices between them.
At mid-distance between $\alpha^j_1$ and $\beta^j_{\sigma_j(1)}$, to the left, we put a reflex vertex $r^j_{\downarrow}$.
To the left of this reflex vertex, we place a vertical \emph{wall} $d^je^j$ ($r^j_{\downarrow}$, $d^j$, and $e^j$ are three consecutive vertices of $\mathcal P$), so that $\ray(\alpha^j_1,r^j_{\downarrow})$ and $\ray(\alpha^j_t,r^j_{\downarrow})$ both intersect $\seg(d^j,e^j)$. 
That implies that for each $i \in [t]$, $\ray(\alpha^j_i,r^j_{\downarrow})$ intersects $\seg(d^j,e^j)$. 
We denote by $p^j_i$ this intersection. 
The greater $i$, the closer $p^j_i$ is to $d^j$. 
Similarly, at mid-distance between $\alpha^j_t$ and $\beta^j_{\sigma_j(t)}$, to the right, we put a reflex vertex $r^j_{\uparrow}$ and place a vertical wall $x^jy^j$ ($r^j_{\uparrow}$, $x^j$, and $y^j$ are consecutive), so that $\ray(\alpha^j_1,r^j_{\uparrow})$ and $\ray(\alpha^j_t,r^j_{\uparrow})$ both intersect $\seg(x^j,y^j)$. 
For each $i \in [t]$, we denote by $q^j_i$ the intersection between $\ray(\alpha^j_i,r^j_{\uparrow})$ and $\seg(x^j,y^j)$.
The smaller $i$, the closer $q^j_i$ is to $x^j$. 

For each $i \in [t]$, we put around $\beta^j_i$ two reflex vertices, one in $\ray(\beta^j_i,p^j_i)$ and one in $\ray(\beta^j_i,q^j_i)$.
Later we may refer to these reflex vertices as \emph{intermediate reflex vertices}. 
In Figure~\ref{fig:linker-vg}, we merged some reflex vertices but the essential part is that $V(\beta^j_i) \cap \segm(d^j,e^j)=\segm(d^j,p^j_i)$ and $V(\beta^j_i) \cap $ $ \segm(x^j,y^j)=\segm(x^j,q^j_i)$.
Finally, we add a triangular pocket rooted at $g^j$ and supported by $\ray(g^j,\alpha^j_1)$ and $\ray(g^j,\alpha^j_t)$, as well as a triangular pocket rooted at $b^j$ and supported by $\ray(g^j,\beta^j_{\sigma_j(1)})$ and $\ray(g^j,\beta^j_{\sigma_j(t)})$.
This ends the description of the vertex linker (see Figure~\ref{fig:linker-vg}).

\begin{figure}[h]
\centering
\resizebox{350pt}{!}
{
\begin{tikzpicture}[scale=1,extended line/.style={shorten >=-#1,shorten <=-#1}, extended line/.default=1cm]


\foreach \i in {1,...,6}{
\node (t\i) at (\i - 1,-0.4) {$\alpha_\i$} ;
\node[vertex] (v\i) at (\i - 1,0) {} ;
}
\draw[very thick] (v1) -- (v6) ;


\node (tp1) at (0, 6.4) {$\beta_4$} ;
\node (tp2) at (1, 6.4) {$\beta_2$} ;
\node (tp3) at (2, 6.4) {$\beta_5$} ;
\node (tp4) at (3, 6.4) {$\beta_3$} ;
\node (tp5) at (4, 6.4) {$\beta_6$} ;
\node (tp6) at (5, 6.4) {$\beta_1$} ;

\foreach \i in {1,...,6}{
\node[vertex] (vp\i) at (\i-1, 6) {} ;
}


\node[vertex, label=south:$r_{\downarrow}$] (p1) at (-2,3) {} ;
\node[vertex, label=below:$d$] (p2) at (-3,3) {} ;
\node[vertex, label=left:$e$] (p3) at (-3,5) {} ;
\node[vertex, label=north west:$f$] (p4) at (-2,5) {} ;
\node[vertex, label=below:$a$] (u1) at (-0.66,1) {} ;
\node[vertex, label=left:$c$] (u2) at (-1.33,2) {} ;
\draw[very thick] (v1) -- (u1) ;
\draw[very thick] (u2) -- (p1) -- (p2) -- (p3) -- (p4) ;
\path [name path=pocketA] (p2) -- (p3) ;

\node[vertex, label=below:$r_{\uparrow}$] (q1) at (7,3) {} ;
\node[vertex, label=below:$x$] (q2) at (8,3) {} ;
\node[vertex, label=above:$y$] (q3) at (8,5) {} ;
\draw[very thick] (q1) -- (q2) -- (q3) ;
\path [name path=pocketB] (q2) -- (q3) ;

\foreach \i in {1,...,6}{
\path [name path global/.expanded=visa\i] (v\i) -- ($(p1)!-1.8cm!(v\i)$);
\draw [name intersections={of=pocketA and visa\i}, very thin] (v\i) -- (intersection-1) coordinate (e\i) ;
\path [name path global/.expanded=visb\i] (v\i) -- ($(q1)!-1.8cm!(v\i)$);
\draw [name intersections={of=pocketB and visb\i}, very thin] (v\i) -- (intersection-1) coordinate (f\i) ;
}

\draw[very thin] (e1) -- (vp6) -- (f1) ;
\draw[very thin] (e2) -- (vp2) -- (f2) ;
\draw[very thin] (e3) -- (vp4) -- (f3) ;
\draw[very thin] (e4) -- (vp1) -- (f4) ;
\draw[very thin] (e5) -- (vp3) -- (f5) ;
\draw[very thin] (e6) -- (vp5) -- (f6) ;

\path [name path global/.expanded=r1] (e2) -- (vp2) ;
\path [name path global/.expanded=r2] (e5) -- (vp3) ;
\path [name path global/.expanded=r3] (e3) -- (vp4) ;
\path [name path global/.expanded=r4] (e6) -- (vp5) ;
\path [name path global/.expanded=r5] (e1) -- (vp6) ;

\path [name path global/.expanded=s1] (f4) -- (vp1) ;
\path [name path global/.expanded=s2] (f2) -- (vp2) ;
\path [name path global/.expanded=s3] (f5) -- (vp3) ;
\path [name path global/.expanded=s4] (f3) -- (vp4) ;
\path [name path global/.expanded=s5] (f6) -- (vp5) ;

\draw [name intersections={of=r1 and s1}, very thick] (vp1) -- (intersection-1) node[vertex] (ref2) {} ;
\draw [name intersections={of=r2 and s2}, very thick] (vp2) -- (intersection-1) node[vertex] (ref3) {} ;
\draw [name intersections={of=r3 and s3}, very thick] (vp3) -- (intersection-1) node[vertex] (ref4) {} ;
\draw [name intersections={of=r4 and s4}, very thick] (vp4) -- (intersection-1) node[vertex] (ref5) {} ;
\draw [name intersections={of=r5 and s5}, very thick] (vp5) -- (intersection-1) node[vertex] (ref6) {} ;

\foreach \i in {2,...,6}{
\draw[very thick] (ref\i) -- (vp\i) ;
}

\path [name path=r6] (f1) -- (vp6) ;
\draw [name intersections={of=r6 and s5}, very thick] (vp6) -- (intersection-1) node[vertex] (ref7) {} ;
\draw[very thick] (ref7) -- (q3) ;

\path [name path=r0] (p4) -- (vp6) ;
\path [name path=s0] (e4) -- (vp1) ;
\draw [name intersections={of=r0 and s0}, very thick] (vp1) -- (intersection-1) node[vertex,label=above:$h$] (ref1) {} ;

\path [name path=y1] (v1) -- ($(p4)!-2cm!(v1)$);
\path [name path=y2] (v6) -- ($(ref1)!-2cm!(v6)$);
\draw [name intersections={of=y1 and y2}, very thick] (p4) -- (intersection-1) node[vertex,label=left:$g$] (y) {} ;
\draw[very thick] (y) -- (ref1) ;
\draw[dashed, very thin] (v1) -- (p4) ;
\draw[dashed, very thin] (v6) -- (ref1) ;

\path [name path=z1] (vp1) -- ($(u2)!-2.5cm!(vp1)$);
\path [name path=z2] (vp6) -- ($(u1)!-2cm!(vp6)$);
\draw [name intersections={of=z1 and z2}, very thick] (u1) -- (intersection-1) node[vertex,label=left:$b$] (z) {} ;
\draw[very thick] (z) -- (u2) ;
\draw[dashed, very thin] (vp1) -- (u2) ;
\draw[dashed, very thin] (vp6) -- (u1) ;

\foreach \i in {1,...,4}{
\node[inner sep=0.2cm] (ep\i) at (e\i) {} ;
\node (wp\i) at (ep\i.west) {$p_\i$};
}

\foreach \i in {3,...,6}{
\node[inner sep=0.2cm] (fp\i) at (f\i) {} ;
\node (xp\i) at (fp\i.east) {$q_\i$};
}
\end{tikzpicture}
}
\caption{Vertex linker gadget $\mathcal P_j$. We omitted the superscript $j$ in all the labels. Here, $\sigma_j(1)=4,~\sigma_j(2)=2,~\sigma_j(3)=5,~\sigma_j(4)=3,~\sigma_j(5)=6,~\sigma_j(6)=1$.}
\label{fig:linker-vg}
\end{figure}

The following lemma formalizes how exactly the vertices $\alpha^j_i$ and $\beta^j_i$ are linked: say, one chooses to put a guard on a vertex $\alpha^j_i$, then the only way to see $\mathcal P_j$ entirely, by putting a second guard on a vertex of $\{\beta^j_1, \ldots, \beta^j_t\}$ is to place it on the vertex $\beta^j_i$.

\begin{lemma}\label{lem:linker-vg}
For any $j \in [k]$, the sub-polygon $\mathcal P_j$ is seen entirely by $\{\alpha^j_v, \beta^j_w\}$ iff $v = w$.
\end{lemma}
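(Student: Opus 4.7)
The plan is to reduce the lemma to a careful visibility analysis of the two walls $\seg(d^j,e^j)$ and $\seg(x^j,y^j)$, which are the hardest parts of $\mathcal P_j$ to guard. The construction directly gives the identities $V(\beta^j_i)\cap\seg(d^j,e^j)=\seg(d^j,p^j_i)$ and $V(\beta^j_i)\cap\seg(x^j,y^j)=\seg(x^j,q^j_i)$. The first step will be to prove the symmetric pair $V(\alpha^j_i)\cap\seg(d^j,e^j)=\seg(p^j_i,e^j)$ and $V(\alpha^j_i)\cap\seg(x^j,y^j)=\seg(q^j_i,y^j)$, which should follow from the fact that $r^j_{\downarrow}$ and $r^j_{\uparrow}$ are the only reflex vertices along the $\alpha$-to-wall lines of sight, so that the edge $r^j_{\downarrow}d^j$ (resp.~the edge $r^j_{\uparrow}x^j$) blocks from $\alpha^j_i$ exactly the sub-segment of the wall below $p^j_i$ (resp.~above $q^j_i$).

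For the easy direction $v=w$, the two walls are fully covered because $\seg(d^j,p^j_v)\cup\seg(p^j_v,e^j)=\seg(d^j,e^j)$ and the analogous union holds on the right wall. The triangular pocket rooted at $g^j$ is seen by $\alpha^j_v$ since $\alpha^j_v$ lies on the chord $\seg(\alpha^j_1,\alpha^j_t)$ and hence in the visibility cone of that pocket by the supporting-rays definition; the symmetric statement shows that $\beta^j_v$ sees the pocket rooted at $b^j$. The rest of $\mathcal P_j$, including the small features produced by the reflex vertices merged around each $\beta^j_i$, is bounded by edges incident either to the $\alpha$-segment or to the $\beta$-side of the polygon, and a direct inspection confirms that every such piece is covered by one of $\alpha^j_v$ or $\beta^j_v$.

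For the converse direction assume $v\neq w$. By symmetry between the two walls I may treat only the case $v<w$ (the case $v>w$ is identical after swapping left and right). Since greater $i$ places $p^j_i$ closer to $d^j$, the point $p^j_v$ lies strictly between $p^j_w$ and $e^j$ along $\seg(d^j,e^j)$. The union $V(\alpha^j_v)\cup V(\beta^j_w)$ restricted to this wall is therefore $\seg(d^j,p^j_w)\cup\seg(p^j_v,e^j)$, missing the non-empty open sub-segment strictly between $p^j_w$ and $p^j_v$; as those are the only two guards, $\mathcal P_j$ is not fully covered. The symmetric argument using the right wall $\seg(x^j,y^j)$ and the dual property that smaller $i$ places $q^j_i$ closer to $x^j$ handles $v>w$. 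The main obstacle of the whole proof is the first step: rigorously justifying the four visibility identities requires checking that none of the edges bounding the triangular pockets at $g^j$ and $b^j$, nor any of the reflex clusters merged near the $\beta^j_i$'s, intervenes on the relevant sight lines; once those local visibility facts are in hand, the rest of the argument is a short combinatorial comparison of the positions of $p^j_v$ versus $p^j_w$ and $q^j_v$ versus $q^j_w$.
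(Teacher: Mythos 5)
Your proposal is correct and follows essentially the same route as the paper's proof: identify the parts of $\mathcal P_j$ hidden from $\alpha^j_v$ as the regions behind $r^j_{\downarrow}$ and $r^j_{\uparrow}$ (the triangles $d^jr^j_{\downarrow}p^j_v$ and $x^jr^j_{\uparrow}q^j_v$), use the constructed identities $V(\beta^j_w)\cap\seg(d^j,e^j)=\seg(d^j,p^j_w)$ and $V(\beta^j_w)\cap\seg(x^j,y^j)=\seg(x^j,q^j_w)$, and conclude by comparing the positions of $p^j_v,p^j_w$ (ruling out $v<w$) and $q^j_v,q^j_w$ (ruling out $v>w$). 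The only cosmetic difference is that you phrase coverage in terms of the wall segments rather than the full triangles behind them, but since seeing all of $\seg(d^j,p^j_v)$ from $\beta^j_v$ entails seeing the whole triangle, this matches the paper's argument.
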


{
\begin{proof}
The regions of $\mathcal P_j$ not seen by $\alpha^j_v$ (i.e., $\mathcal P_j \setminus V(\alpha^j_v)$) consist of the triangles $d^jr^j_{\downarrow}p^j_v$, $x^jr^j_{\uparrow}q^j_v$ and partially the triangle $a^jb^jc^j$.
The triangle $a^jb^jc^j$ is anyway entirely seen by the vertex $\beta^j_i$, for any $i \in [t]$.
It remains to prove that $d^jr^j_{\downarrow}p^j_v \cup x^jr^j_{\uparrow}q^j_v  \subseteq V(\beta^j_w)$ iff $v=w$.

It holds that $d^jr^j_{\downarrow}p^j_v \cup x^jr^j_{\uparrow}q^j_v  \subseteq V(\beta^j_v)$ since, by construction, the two reflex vertices neighboring $\beta^j_v$ are such that $\beta^j_v$ sees $\seg(d^j,p^j_\alpha)$ (hence, the whole triangle $d^jr^j_{\downarrow}p^j_v$) and $\seg(x^j,q^j_\alpha)$ (hence, the whole triangle $x^jr^j_{\uparrow}q^j_v$).   
Now, let us assume that $v \neq w$.
If $v < w$, the interior of the segment $\seg(p_v,p_w)$ is not seen by $\{\alpha^j_v, \beta^j_w\}$, and if $v > w$, the interior of the segment $\seg(q_v,q_w)$ is not seen by $\{\alpha^j_v, \beta^j_w\}$.
\end{proof}
}

The issue we now have is that one could decide to place a guard on a vertex $\alpha^j_i$ and a second guard on a reflex vertex between $\beta^j_{\sigma_j(w)}$ and $\beta^j_{\sigma_j(w+1)}$ (for some $w \in [t-1]$).
This is indeed another way to guard the whole $\mathcal P_j$.
We will now describe a sub-polygon $\mathcal F_j$ (for each $j \in [k]$) called \emph{filter gadget} (see Figure~\ref{fig:filter-vg}) satisfying the property that all its (triangular) pockets can be guarded by adding only one guard on a vertex of $\mathcal F_j$ iff there is already a guard on a vertex $\beta^j_i$ of $\mathcal P_j$.
Therefore, the filter gadget will prevent one from placing a guard on a reflex vertex of $\mathcal P_j$.  
The functioning of the gadget is again based on Lemma~\ref{lem:linking-set-system:pg&vg}.

\textbf{Filter gadget}.
Let $d^j_1, \ldots, d^j_t$ be $t$ consecutive vertices of a regular, say, $20t$-gon, so that the angle made by $\ray(d^j_1,d^j_2)$ and the $y$-axis is a bit below $45^\circ$, while the angle made by $\ray(d^j_{t-1},d^j_t)$ and the $y$-axis is a bit above $45^\circ$. 
The vertices $d^j_1, \ldots, d^j_t$ therefore lie equidistantly on a circular arc~$\mathcal C$.
We now mentally draw two lines $\ell_h$ and $\ell_v$; $\ell_h$ is a horizontal line a bit below $d^j_1$, while $\ell_v$ is a vertical line a bit to the right of $d^j_t$.
We put, for each $i \in [t]$, a vertex $x^j_i$ at the intersection of $\ell_h$ and the tangent to $\mathcal C$ passing through $d^j_i$.
Then, for each $i \in [t-1]$, we set a triangular pocket $\mathcal P(x^j_i)$ rooted at $x^j_i$ and supported by $\ray(x^j_i,d^j_1)$ and $\ray(x^j_i,\beta^j_{\sigma_j(i+1)})$.
For convenience, each point $\beta^j_{\sigma_j(i)}$ is denoted by $c^j_i$ on Figure~\ref{fig:filter-vg}.
We also set a triangular pocket $\mathcal P(x^j_t)$ rooted at $x^j_t$ and supported by $\ray(x^j_t,d^j_1)$ and $\ray(x^j_t,d^j_t)$.
Similarly, we place, for each $i \in [t-1]$, a vertex $y^j_i$ at the intersection of $\ell_v$ and the tangent to $\mathcal C$ passing through $d^j_{i+1}$. 
Finally, we set a triangular pocket $\mathcal P(y^j_i)$ rooted at $y^j_i$ and supported by $\ray(y^j_i,\beta^j_{\sigma_j(i)})$ and $\ray(y^j_i,d^j_t)$, for each $i \in [t-1]$ (see Figure~\ref{fig:filter-vg}).
We denote by $\mathcal P(\mathcal F_j)$ the $2t-1$ triangular pockets of $\mathcal F_j$.

\begin{figure}[h]
\centering
\resizebox{350pt}{!}
{
\begin{tikzpicture}[scale=0.68]
\node[vertex,label=right:$d_1$] (a1) at (0,0) {} ;
\path (a1)--++(65:1) node[vertex,label=right:$d_2$] (a2) {} ;
\path (a2)--++(55:1) node[vertex,label=right:$d_3$] (a3) {} ;
\path (a2)--++(60:0.5) coordinate (aa2) {} ;

\path (a3)--++(45:1) node[vertex,label=below:$d_4$] (a4) {} ;
\path (a3)--++(50:0.5) coordinate (aa3) {} ;

\path (a4)--++(35:1) node[vertex,label=below:$d_5$] (a5) {} ;
\path (a4)--++(40:0.5) coordinate (aa4) {} ;

\path (a5)--++(25:1) node[vertex,label=below:$d_6$] (a6) {} ;
\path (a5)--++(30:0.5) coordinate (aa5) {} ;

\path (a6)--++(20:0.5) coordinate (aa6) {} ;

\draw[very thick] (a1) -- (a2) -- (a3) -- (a4) -- (a5) -- (a6) ;

\node[vertex,label=below:$x_1$] (x1) at (0,-1) {} ;
\draw[very thick] (a1) -- (x1) ;
\path [name path = c] (0,-1) -- (-9,-1) ;
\path [name path= d1] (a2) -- ($(aa2)!3cm!(a2)$);
\path [name path= d2] (a3) -- ($(aa3)!5cm!(a3)$);
\path [name path= d3] (a4) -- ($(aa4)!7cm!(a4)$);
\path [name path= d4] (a5) -- ($(aa5)!10cm!(a5)$);
\path [name path= d5] (a6) -- ($(aa6)!15cm!(a6)$);

\foreach \i [count=\j from 2] in {1,...,5}{
\draw [name intersections={of=c and d\i}, dotted] (a\j) -- (intersection-1) node[vertex,label=below:$x_\j$] (x\j) {} ;
}

\begin{scope}[opacity=0.3,yshift=1cm, xshift=-2cm, rotate=0]
\foreach \i in {1,...,6}{
\node[vertex,label=above:$c_\i$] (b\i) at (-7+\i,9) {};
}
\node[vertex] (zr1) at (-5.4,8.94) {} ;
\node[vertex] (zr2) at (-4.5,8.94) {} ;
\node[vertex] (zr3) at (-3.6,8.94) {} ;
\node[vertex] (zr4) at (-2.4,8.94) {} ;
\node[vertex] (zr5) at (-1.5,8.94) {} ;

\foreach \i [count=\j from 2] in {1,...,5}{
\draw[very thick] (b\i) -- (zr\i) ;
\draw[very thick] (zr\i) -- (b\j) ;
}
\end{scope}

\path [name path = g] (0,-0.8) -- (-9,-0.8) ;

\foreach \i [count=\j from 2] in {1,...,5}{
\draw[name path global/.expanded=f\i,very thin] (x\i) -- (b\j) ;
}
\path [name path = f6] (x6) -- (a6) ;

\foreach \i in {2,...,6}{
\draw [name path global/.expanded= h\i, very thin] (x\i) -- (a1) ;
}

\foreach \i in {2,...,6}{
\draw [name intersections={of=g and h\i}, very thick] (x\i) -- (intersection-1) coordinate (z\i) {} ;
}

\foreach \i in {1,...,6}{
\draw [name intersections={of=g and f\i}, very thick] (x\i) -- (intersection-1) coordinate (w\i) {} ;
}

\foreach \i [count=\j from 2] in {1,...,5}{
\draw[very thick] (w\i) -- (z\j) ;
}

\path [name path = l] (4.5,3.4) -- (4.5,9) ;
\path [name path= m2] (a2) -- ($(aa2)!-8cm!(a2)$);
\path [name path= m3] (a3) -- ($(aa3)!-6cm!(a3)$);
\path [name path= m4] (a4) -- ($(aa4)!-4cm!(a4)$);
\path [name path= m5] (a5) -- ($(aa5)!-2cm!(a5)$);
\draw [name intersections={of=l and m2}, dotted] (a2) -- (intersection-1) node[vertex,label=right:$y_1$] (y1) {} ;
\draw [name intersections={of=l and m3}, dotted] (a3) -- (intersection-1) node[vertex,label=right:$y_2$] (y2) {} ;
\draw [name intersections={of=l and m4}, dotted] (a4) -- (intersection-1) node[vertex,label=right:$y_3$] (y3) {} ;
\draw [name path= k5,name intersections={of=l and m5}, dotted] (a5) -- (intersection-1) node[vertex,label=right:$y_4$] (y4) {} ;

\node[vertex,label=right:$y_5$] (y5) at (4.5,3.4) {};
\draw[very thick] (y5) -- (a6) ;

\path [name path = p] (4.3,3.4) -- (4.3,10) ;
\foreach \i [count=\j from 2] in {1,...,5}{
\draw [name path global/.expanded= j\j, very thin] (y\i) -- (b\i);
\draw [name intersections={of=p and j\j}, very thick] (y\i) -- (intersection-1) coordinate (za\j) {} ;
}

\foreach \i [count=\j from 2] in {1,...,4}{
\draw[name path global/.expanded= k\j,very thin] (y\i) -- (a6) ;
\draw [name intersections={of=p and k\j}, very thick] (y\i) -- (intersection-1) coordinate (zb\j) {} ;
}

\foreach \i [count=\j from 3] in {2,...,5}{
\draw[very thick] (za\j) -- (zb\i) ;
}
\end{tikzpicture}
}
\caption{The filter gadget $\mathcal F_j$. Again, we omit the superscript $j$ on the labels. Vertices $c_1, c_2, \ldots, c_t$ are not part of $\mathcal F_j$ and are in fact the vertices $\beta^j_{\sigma_j(1)}, \beta^j_{\sigma_j(2)}, \ldots, \beta^j_{\sigma_j(t)}$ and the vertices in between the $c_i$'s are the reflex vertices that we have to \emph{filter out}.}
\label{fig:filter-vg}
\end{figure}

\begin{lemma}\label{lem:filter-vg}
For each $j \in [k]$, the only ways to see $\mathcal P(\mathcal F_j)$ and the triangle $a^jb^jc^j$ entirely with only two guards on vertices of $\mathcal P_j \cup \mathcal P(\mathcal F_j)$ is to place them on vertices $c^j_i$ and $d^j_i$ $($for any $i \in [t]$$)$.
\end{lemma}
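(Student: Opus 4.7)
The plan is to first characterize, for each triangular pocket of $\mathcal{F}_j$ and for the triangle $a^jb^jc^j$, which vertices of $\mathcal{P}_j \cup \mathcal{F}_j$ see it entirely, and then combine these constraints in the spirit of Lemma~\ref{lem:linking-set-system:pg&vg}. The key geometric ingredient is that $x^j_i$ is placed on the tangent to the arc $\mathcal{C}$ at $d^j_i$: from $x^j_i$, the vertices $d^j_1, \ldots, d^j_i$ are visible along the concave chain while $d^j_{i+1}, \ldots, d^j_t$ lie on the far side of the tangent and are hidden; an analogous statement holds at each $y^j_i$. I would deduce from this that, among vertices of $\mathcal{P}_j \cup \mathcal{F}_j$: pocket $\mathcal{P}(x^j_i)$ with $i<t$ is seen entirely exactly from $\{d^j_1, \ldots, d^j_i\} \cup \{c^j_{i+1}, \ldots, c^j_t\}$; pocket $\mathcal{P}(x^j_t)$ is seen entirely exactly from $\{d^j_1, \ldots, d^j_t\}$; pocket $\mathcal{P}(y^j_i)$ is seen entirely exactly from $\{c^j_1, \ldots, c^j_i\} \cup \{d^j_{i+1}, \ldots, d^j_t\}$; and the triangle $a^jb^jc^j$ is seen entirely only from vertices lying in the cone at $b^j$ bounded by $\ray(b^j,\beta^j_{\sigma_j(1)})$ and $\ray(b^j,\beta^j_{\sigma_j(t)})$, which contains the $c$-vertices and a few reflex vertices of the $\beta$-chain but no $d$-vertex.

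Now, given two guards $g_1,g_2 \in \mathcal{P}_j \cup \mathcal{F}_j$ that jointly cover $\mathcal{P}(\mathcal{F}_j)$ and the triangle, pocket $\mathcal{P}(x^j_t)$ forces one of them, say $g_1$, to be some $d^j_a$, while the triangle forces the other, $g_2$, to lie in the cone at $b^j$, and in particular not to be a $d$-vertex. I would next rule out reflex vertices of the $\beta$-chain as candidates for $g_2$: for a reflex vertex $\rho$ strictly between $c^j_b$ and $c^j_{b+1}$, the filter cones at $x^j_b$ and at $y^j_b$ are tight (their bounding rays pass through the neighboring $c$-vertex), so $\rho$ falls outside at least one of them; combined with the fact that $d^j_a$ already misses the pockets $\mathcal{P}(x^j_i)$ with $i<a$ and the pockets $\mathcal{P}(y^j_i)$ with $i\geq a$, this would leave at least one pocket unguarded. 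Hence $g_2 = c^j_b$ for some $b \in [t]$.

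Once $g_2 = c^j_b$ is established, the equality $a = b$ follows by mimicking the proof of Lemma~\ref{lem:linking-set-system:pg&vg}: covering $\mathcal{P}(x^j_i)$ for every $i \in [t-1]$ forces $b \geq a$ (take $i = a-1$), while covering the $\mathcal{P}(y^j_i)$'s symmetrically forces $b \leq a$ (take $i = b-1$), so $a = b$. The converse, that $\{d^j_i, c^j_i\}$ jointly covers everything, is immediate from the visibility characterization above. The main obstacle will be the reflex-vertex step, as it demands a careful geometric argument about the exact position of each reflex vertex relative to the tight filter cones, rather than the clean combinatorial matching that carries the rest of the proof.
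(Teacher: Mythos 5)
Your proposal is correct and follows essentially the same route as the paper: both arguments rest on the visibility pattern induced by placing $x^j_i$ and $y^j_i$ on the tangents to the arc (so that $\mathcal P(x^j_i)$ is seen, among the relevant vertices, exactly by $d^j_1,\ldots,d^j_i$ and $c^j_{i+1},\ldots,c^j_t$, symmetrically for the $y$-pockets), then rule out the intermediate reflex vertices, and close with the matching argument of Lemma~\ref{lem:linking-set-system:pg&vg}. Two cosmetic remarks: the reason $d^j_{i+1},\ldots,d^j_t$ miss $x^j_i$ is occlusion by the convex chain up to $d^j_i$ (all the $d$'s lie on the \emph{same} side of the tangent at $d^j_i$), not that they lie beyond that tangent; and the triangle $a^jb^jc^j$ is also seen entirely by its own vertices $a^j,b^j,c^j$, a case the paper dispatches by observing that these three vertices see nothing of $\mathcal P(\mathcal F_j)$.
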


{
\begin{proof}
Proving this lemma will, in particular, entail that it is not possible to see $\mathcal P(\mathcal F_j)$ entirely with only two vertices if one of them is a reflex vertex between $c^j_i$ and $c^j_{i+1}$.
We recall that such a vertex is called an intermediate reflex vertex (in color class $j$).
Because of the pocket $a^jb^jc^j$, one should put a guard on a $c^j_i$ (for some $i \in [t]$) or on an intermediate reflex vertex in class $j$.
As vertices $a^j$, $b^j$, and $c^j$ do not see anything of $\mathcal P(\mathcal F_j)$, placing the first guard at one of those three vertices cannot work as a consequence of what follows.

Say, the first guard is placed at $c^j_i$ ($=\beta^j_{\sigma(i)}$). 
The pockets $\mathcal P(x^j_1), \mathcal P(x^j_2), \ldots, \mathcal P(x^j_{i-1})$ and $\mathcal P(y^j_i),$ $ \mathcal P(y^j_{i+1}), \ldots, \mathcal P(x^j_{t-1})$ are entirely seen, while the vertices $x^j_i, x^j_{i+1}, \ldots, x^j_t$ and $y^j_1, y^j_2, \ldots,$ $y^j_{i-1}$ are not.
The only vertex that sees simultaneously all those vertices is $d^j_i$.
The vertex $d^j_i$ even sees the \emph{whole} pockets $\mathcal P(x^j_i), \mathcal P(x^j_{i+1}), \ldots, \mathcal P(x^j_t)$ and $\mathcal P(y^j_1), \mathcal P(y^j_2), \ldots,$ $\mathcal P(y^j_{i-1})$.
Therefore, all the pockets $\mathcal P(\mathcal F_j)$ are fully seen.

Now, say, the first guard is put on an intermediate reflex vertex $r$ between $c^j_i$ and $c^j_{i+1}$ (for some $i \in [t-1]$).
Both vertices $x^j_i$ and $y^j_i$, as well as $x^j_t$, are not seen by $r$ and should therefore be seen by the second guard.
However, no vertex simultaneously sees those three vertices.  
\end{proof}
}

\textbf{Putting the pieces together.}
The permutation $\sigma$ is encoded the following way. 
We position the vertex linkers $\mathcal P_1, \mathcal P_2,$ $\ldots, \mathcal P_k$ such that $\mathcal P_{i+1}$ is below and slightly to the left of $\mathcal P_i$. 
Far below and to the right of the $\mathcal P_i$'s, we place the $\mathcal F_i$'s such that the uppermost vertex of $\mathcal F_{\sigma(i)}$ is close and connected to the leftmost vertex of $\mathcal F_{\sigma(i+1)}$, for all $i \in [t-1]$.
We add a constant number of vertices in the vicinity of each $\mathcal P_j$, so that the only filter gadget that vertices $\beta^j_1, \ldots, \beta^j_t$ can see is $\mathcal F_j$ (see Figure~\ref{fig:overall-vg}).
Similarly to the point guard version, we place vertically and far from the $\alpha^j_i$'s, one triangular pocket $\mathcal P(z_{A,q})$ rooted at vertex $z_{A,q}$ and supported by $\ray(z_{A,q},\alpha^j_i)$ and $\ray(z_{A,q},\alpha^{j'}_{i'})$, for each $A$-interval $I_q=[a^j_i,a^{j'}_{i'}] \in \mathcal S_A$ (Track $1$).
Finally, we place vertically and far from the $d^j_i$'s, one triangular pocket $\mathcal P(z_{B,q})$ rooted at vertex $z_{B,q}$ and supported by $\ray(z_{B,q},d^j_i)$ and $\ray(z_{B,q},d^{j'}_{i'})$, for each $B$-interval $I_q=[b^j_{\sigma_j(i)},b^{j'}_{\sigma_{j'}(i')}] \in \mathcal S_B$ (Track $2$).
We make sure that, all projected on the $x$-axis, $\mathcal F_{\sigma(1)}$ is to the right of $\mathcal P_1$ and to the left of Track 1, so that, for every $i \in [t]$, the vertex $d_i^{\sigma(1)}$ sees the top edge of the gallery entirely.
This ends the construction (see Figure~\ref{fig:overall-vg}).

\begin{figure}[h]
\centering
\resizebox{300pt}{!}
{
\includegraphics{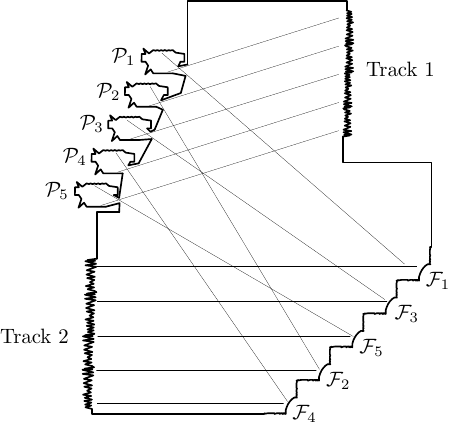}
}
\caption{Overall picture of the reduction with $k=5$, and $\sigma=42531$.
  The linker gadgets $\mathcal P_j$, together with $\mathcal F_j$, force guards at vertices $\alpha_{i(j)}^j, \beta_{i(j)}^j$.
  The filter gadgets $\mathcal F_j$ transmit the choice of $\beta_{i(j)}^j$ and ensure that no other guard placement can be made in $\mathcal P_j$.
  The $A$-intervals of the \shs instance are encoded by triangular pockets on Track 1, while the $B$-intervals are encoded on Track 2.}
\label{fig:overall-vg}
\end{figure}

\textbf{Correctness.}
We now prove the correctness of the reduction.
Assume that $\mathcal I$ is a YES-instance and let $\{(a^1_{s_1},b^1_{s_1}), \ldots, (a^k_{s_k},b^k_{s_k})\}$ be a solution.
We claim that the set of vertices $G=\{\alpha^1_{s_1}, \beta^1_{s_1}, d^1_{\sigma^{-1}_1(s_1)},$ $ \ldots, $ $\alpha^k_{s_k}, \beta^k_{s_k}, d^k_{\sigma^{-1}_k(s_k)}\}$ guards the whole polygon $\mathcal P$.
Let $z^j := d^j_{\sigma^{-1}_j(s_j)}$ for notational convenience.
By Lemma~\ref{lem:linker-vg}, for each $j \in [k]$, the sub-polygon $\mathcal P_j$ is entirely seen, since there are guards on $\alpha^j_{s_j}$ and $\beta^j_{s_j}$.
By Lemma~\ref{lem:filter-vg}, for each $j \in [k]$, all the pockets of $\mathcal F_j$ are entirely seen, since there are guards on $\beta^j_{s_j}=c^j_{\sigma^{-1}_j(s_j)}$ and $d^j_{\sigma^{-1}_j(s_j)}=z^j$.
For each $A$-interval (resp.~$B$-interval) in $\mathcal S_A$ (resp.~$\mathcal S_B$) there is at least one $2$-element $(a^j_{s_j},b^j_{s_j})$ such that $a^j_{s_j} \in \mathcal S_A$ (resp. $b^j_{s_j} \in \mathcal S_B$).
Thus, the corresponding pocket is guarded by $\alpha^j_{s_j}$ (resp.~$\beta^j_{s_j}$).
The rest of the polygon is seen by, for instance, $z^{\sigma(1)}$ and $z^{\sigma(k)}$.

We now assume that there is a set $G$ of $3k$ vertices guarding $\mathcal P$.
We will show that $\mathcal I$ is a YES-instance.
For each $j \in [k]$, vertices $b^j$, $g^j$, and $x^j_t$ are seen by three pairwise-disjoint sets of vertices.
The first two sets are contained in the vertices of sub-polygon $\mathcal P_j$ and the third one is contained in the vertices of $\mathcal F_j$.
Therefore, to see $\mathcal P_j \cup \mathcal P(\mathcal F_j)$ entirely, three vertices are necessary.
Summing that over the $k$ color classes, this corresponds already to $3k$ vertices which is the size of~$G$.
Thus, $G$ contains a set $S_j$ of \emph{exactly} $3$ guards among the vertices of $\mathcal P_j \cup \mathcal P(\mathcal F_j)$.

The guard of $S_j$ responsible for seeing $g^j$ does not see $b^j$ nor any pockets of $P(\mathcal F_j)$.
Hence there are only two guards of $S_j$ performing the latter task. 
Therefore, by Lemma~\ref{lem:filter-vg}, there should be an $s_j \in [t]$ such that both $d^j_{s_j}$ and $c^j_{s_j}=\beta^j_{\sigma_j(s_j)}$ are in $G$.
The only vertices seeing $g^j$ are $f^j, g^j, h^j$ and $a^j_1, \ldots, a^j_t$.
As $d^j_{s_j}$ and the $3k-3$ guards of $G \setminus S_j$ do not see the edges $d^je^j$ and $x^jy^j$ at all, by Lemma~\ref{lem:linker-vg}, among $a^j_1, \ldots, a^j_t$ the only possibility for the third guard of $S_j$ is $\alpha^j_{\sigma_j(s_j)}$.
We can assume that the third guard of $S_j$ is indeed $\alpha^j_{\sigma_j(s_j)}$, since $f^j, g^j, h^j$ do not see any pockets outside of $\mathcal P_j$ (whereas $\alpha^j_{\sigma_j(s_j)}$, in principle, does in Track 1).

So far, we showed that $G$ is of the form $\{\alpha^1_{\sigma_1(s_1)}, \beta^1_{\sigma_1(s_1)}, d^1_{s_1}, \ldots, \alpha^j_{\sigma_j(s_j)}, \beta^j_{\sigma_j(s_j)}, d^j_{s_j}, \ldots, \alpha^k_{\sigma_k(s_k)},$ $ \beta^k_{\sigma_k(s_k)}, d^k_{s_k}\}$.
It means that $\alpha^1_{\sigma_1(s_1)}, \ldots, \alpha^k_{\sigma_k(s_k)}$ see all the pockets of Track 1, while $d^1_{s_1}, \ldots, d^k_{s_k}$ see all the pockets of Track 2.
Therefore the set of $k$ 2-elements $\{(a^1_{\sigma_1(s_1)},b^1_{\sigma_1(s_1)}), \ldots, (a^k_{\sigma_k(s_k)},b^k_{\sigma_k(s_k)})\}$ is a hitting set of both $\mathcal S_A$ and $\mathcal S_B$, hence $I$ is a YES-instance.

Let us bound the number of vertices of $\mathcal P$.
Each sub-polygon $\mathcal P_j$ or $\mathcal F_j$ contains $O(t)$ vertices. 
\emph{Track $1$} contains $3|\mathcal S_A|$ vertices and \emph{Track $2$} contains $3|\mathcal S_B|$ vertices. 
Linking everything together requires $O(k)$ additional vertices.
So, in total, there are $O(kt+|\mathcal S_A|+|\mathcal S_B|)$ vertices.
Thus, this reduction together with Theorem~\ref{cor:main} implies that \vgag is $\wone$-hard and cannot be solved in time $f(k)n^{o(k / \log k)}$, where $n$ is the number of vertices of the polygon and $k$ the number of guards, for any computable function $f$, unless the ETH fails.
\end{proof}



\end{document}